\documentclass[article, nojss]{jss} 

\graphicspath{{figure/}{./}}
\usepackage{natbib}
\usepackage{mathtools, amssymb, amsthm, mathrsfs, amsmath}
\usepackage[normalem]{ulem}
\usepackage[utf8]{inputenc}
\usepackage[ruled,vlined]{algorithm2e}
\usepackage{placeins} 

\newcommand{\QV}{QV} 

\DeclareMathOperator*{\argmax}{arg\,max}
\DeclareMathOperator*{\argmin}{arg\,min}

\newtheoremstyle{thm}
  { }
  { }
  {\itshape}
  {4pt}
  {\bfseries}
  { }
  {\newline}
  {\thmname{#1}\thmnumber{ #2}:\thmnote{ #3}}
\theoremstyle{thm}
\newtheorem{theorem}{Theorem}[section]


\author{Andreas Nordland\\ Section of Biostatistics \\ University of Copenhagen
   \AND Klaus Kähler Holst \\ Novo Nordisk}
\Plainauthor{Andreas Nordland, Klaus Kähler Holst}

\title{Policy Learning with the \pkg{polle} package}
\Plaintitle{Policy Learning with the polle package}
\Shorttitle{Policy Learning (polle)}

\Abstract{
  
The \proglang{R} package \pkg{polle} is a unifying framework for learning and
evaluating finite stage policies based on observational data. The package
implements a collection of existing and novel methods for causal policy
learning including doubly robust restricted Q-learning, policy tree learning, and
outcome weighted learning. The package deals with (near) positivity violations
by only considering realistic policies. Highly flexible machine learning methods
can be used to estimate the nuisance components and valid inference for the
policy value is ensured via cross-fitting. The library is built up around a
simple syntax with four main functions \code{policy_data()},
\code{policy_def()}, \code{policy_learn()}, and \code{policy_eval()} used to
specify the data structure, define user-specified policies, specify policy
learning methods and evaluate (learned) policies. The functionality of the
package is illustrated via extensive reproducible examples.

}

\Keywords{policy learning, dynamic treatment regimes, semiparametric inference,
  double machine learning, \proglang{R}}
\Plainkeywords{policy learning, dynamic treatment regimes, semi-parametric inference,
  double machine learning, R}

\Address{
  Andreas Nordland \\
  Section of Biostatistics\\
  University of Copenhagen \\
  Øster Farigmagsgade 5\\
  1014 Copenhagen\\
  Denmark \\
  E-mail: \email{andreasnordland@gmail.com}

  Klaus Kähler Holst \\
  Novo Nordisk\\
  Vandtårnsvej 108-110\\
  2860 Søborg\\
  Denmark \\
  E-mail: \email{kkzh@novonordisk.com}
}

\begin{document}

\section{Introduction} \label{sec:intro}

Sequential decision problems arise in various fields. Important examples
include deciding on treatment assignments in a medical application, defining
equipment maintenance strategies in a military or industrial setting, or
determining a sales strategy in a commercial context. Policy learning seeks to
identify sequential decision strategies from observational data and to quantify
the effect of implementing such a strategy using causal inference techniques.
While the theoretical field has progressed substantially during the last decade
based on advances in semiparametric methods, there has been a large gap in terms
of generic implementations of these methods being available to practitioners.

The \proglang{R} package \pkg{polle} \citep{cran:polle} is a unifying framework
for learning optimal policies/dynamic treatment regimes from historical data
based on cross-fitted doubly robust loss functions for finite horizon problems
with discrete action sets. Within this scope, the package unifies available
methods from other \proglang{R} packages and introduces previously
unavailable methods. The performance of the methods can then easily be
evaluated, compared and applied to new data. As a unique feature, the package
also handles a stochastic number of decision stages. In addition, the package
deals with issues related to learning optimal policies from observed data under
(near) positivity violations by considering \emph{realistic policies}.

The core concept of \pkg{polle} is to use doubly robust scores/double machine
learning developed from semiparametric theory when estimating the value of a
policy \citep{robins1986new, chernozhukov2018double}. These scores are also used
to construct a doubly robust loss function for the optimal policy value
\citep{tsiatis2019dynamic}. The resulting loss function is the basis for policy
value search within
a restricted class of policies such as policy trees \citep{athey2021policy}.
Transformations of the value loss function leads to a range of other loss
functions and methods such as doubly robust $Q$-learning
\citep{luedtke2016super} and outcome weighted learning based on support vector
machines \citep{zhang2012estimating, zhao2012estimating}. As is customary within
targeted learning and double machine learning, our policy evaluation and policy
learning methods apply cross-fitting schemes, which allow for inference under
weak conditions even when nuisance parameters are learned from highly flexible
machine learning methods.

Recursive policy learning is closely related to estimating heterogeneous
causal effects via the conditional average treatment effect, see
\citep{kennedy2020towards, semenova2021debiased} for recent overviews of the
field and some of the challenges regarding inference and generic error bounds.
Other notable mentions include \citep{kunzel2019metalearners} and
\citep{athey2019generalized}. Variable importance measures formulated as
projections are also closely related to policy learning, see
\citep{van2006statistical}.

The \pkg{polle} \proglang{R} package joins a collection of other packages
available on CRAN,
\url{https://CRAN.R-project.org/view=CausalInference}.
Other packages which should be highlighted include \pkg{DynTxRegime}
\citep{cran:dyntxregime} which provides
methods for estimating policies including interactive $Q$-learning, outcome
weighted learning, and value search. However, most of the methods are only
implemented for single stage problems and the package has no cross-fitting
methods for consistent policy evaluation. The \pkg{polle} package wraps
efficient augmentation and relaxation learning and residual weighted learning
from the \pkg{DynTxRegime} package.
The package \pkg{policytree} \citep{sverdrup2020policytree, policytree} is an
implementation of single stage policy tree value search based
on doubly robust scores. The \pkg{polle} package wraps this functionality and
extends it to a stochastic number of stages. A related \proglang{R} package is
\pkg{grf} \citep{grf2024}, which implements causal forests for the conditional
average treatment effect. Lastly, the \proglang{R} package \pkg{DTRlearn2} \citep{cran:DTRlearn2}
implements outcome weighted learning in a fixed number of stages.
The \pkg{polle} package also wraps this functionality.

Beyond \proglang{R}, the \proglang{Python} package \pkg{EconML} \citep{econml}
implements a wide range of learners for the conditional average treatment
effect including doubly robust estimators (equivalent to doubly robust
$Q$-learning as formulated in \pkg{polle}), double machine learning estimators,
and orthogonal random forests. The package also implements policy trees and
forests. To our knowledge, \pkg{EconML} does not contain methods for cross-fitted
policy evaluation. For multi-stage decision problems, the package only
considers G-estimation based on specific Markov decision process structural
equation models, see \citep{lewis2020double}.

The available methods for policy learning in proprietary software are still
very limited. A \proglang{SAS} macro denoted \code{PROC QLEARN} performs
standard $Q$-learning \citep{procqlearn}. In \proglang{stata} methods are
limited to estimating average treatment effects and potential outcome means
with the \code{teffects} function \citep{stata}.

In Section \ref{sec:concepts} we introduce the most important concepts of
doubly robust policy learning in a simple single-stage setting. In doing so, we
avoid the cumbersome notation needed for the general sequential setup as presented in
Section \ref{sec:methods}. In Section \ref{sec:syntax} we give an overview of
the package syntax and describe the main functions of the package.
Section \ref{sec:examples} contains four reproducible examples
based on simulated data covering all aspects of the package.
In Section \ref{sec:harvard_example} we present a complete analysis of a data set
investigating the treatment effect of a literacy intervention.
Finally, in Section \ref{sec:summary} we summarize the
functionality of \pkg{polle} and discuss limitations and future developments.

\section{Concepts} \label{sec:concepts}

In a randomized trial investigating the average treatment effect of two
competing treatments we should ask ourselves
whether the treatment effect is heterogeneous or not,
i.e., whether the subjects respond differently to the treatments depending on
their age, sex, disease history, etc. If so, is it possible to learn a treatment policy
from the observed data that will have a greater expected outcome than any of the
individual treatments?

For simplicity, we consider a single-stage problem, where each subject
receives a completely randomized treatment at a single time point. Let $A$ denote the
treatment variable with two levels $A\in \{0,1\}$, and let $U$ denote the measured
utility outcome. The average treatment effect is a causal parameter which
can be formulated via potential outcomes
\citep{rubin1974estimating, hernan2010causal}.
We let $U^a$ denote the potential utility had we forced the subject to receive
treatment $A=a$, and we refer to $\E[U^a]$ as the value of the given treatment.
The average treatment effect is now defined as the difference in value,
$\E[U^1 - U^0]$, and due to complete randomization, the effect is identified as
$\E[U|A=1] - \E[U|A=0]$. The effect is easily estimated based on a sample of
$N$ iid observations $O = (A, U)$. In the following we will assume that treatment
$A=1$ is recommended if the estimated effect is positive and vice versa.

\begin{figure}[htbp]
\centerline{\includegraphics[width=0.3\textwidth]{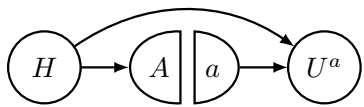}}
\caption{Single world intervention graph illustrating confounding via the history.}
  \label{fig:swig}
\end{figure}

Suppose now that we also collect a set of baseline covariates
$H\in \mathcal{H}$ for each subject, and that treatment randomization
depends on this history by design. The treatment probability model is then given
by a known function
\begin{align}
g_0(h, a) = \Prob(A=a|H=h). \label{eq:concept_pos}
\end{align}
If the trial has a sensible design we will also know that $g_0(H, a)>0$ almost
surely for $a\in\{0,1\}$, which is commonly referred to as the positivity condition.
Due to confounding, the average treatment effect will no longer be identified by the mean
utility in each treatment group,
see Figure \ref{fig:swig}.
However, it is possible to show that
\begin{align}
\E\left[\frac{I\{A=a\}}{g_0(H, a)} U\right] = \E[U^a]. \label{eq:concept_ipw}
\end{align}
This equality inspires an inverse probability weighting estimator for
the value of each treatment group. In an observational study, the treatment
probability function $g_0$ might not be known a priori. In that case, we instead
use some appropriate estimate $g_N$.

Alternatively, the treatment value is identified as
\begin{align}
\E[\E[U|A=a, H]] = \E[U^a], \label{eq:concept_Q}
\end{align}
where we define the quality function as $Q_0(h, a) = \E[U|A=a, H=h]$.
Usually, the $Q$-function is not known a priori and will need to be estimated.
The fit $Q_N$ is then used to construct an outcome regression estimate
of the value in each treatment group, see \citep{robins1986new}.\\

If the treatment effect is heterogeneous across the collected history it
is possible that one group of subjects benefit from treatment $A=1$ and that
another group of subjects benefit from treatment $A=0$. The researcher may even
have a candidate treatment policy
$d:\mathcal{H}\rightarrow \{0,1\}$ that is believed to improve the
value.
Let $U^d$ denote the potential utility had we forced the subject to be treated
in accordance to policy $d$. The policy value $E[U^d]$ can then be estimated
using \eqref{eq:concept_ipw} or \eqref{eq:concept_Q} or a combination of the two.
Define the doubly robust policy score as
\begin{align}
Z(d, g, Q)(O) = Q(H, d(H)) + \frac{I\{A=d(H)\}}{g(H, A)} \left(U - Q(H, A) \right). \label{eq:Z_d_K1}
\end{align}
If either $g = g_0$ or $Q = Q_0$ it holds that
\begin{align}
\E\left[Z(d,g, Q)(O)\right] = \E[U^d]. \label{eq:E_Z_d_K1}
\end{align}
The associated empirical mean plug-in estimator of the policy value
\begin{align*}
\theta_{N} =  N^{{-1}}\sum_{i=1}^{N} Z(d, g_{N}, Q_{N})(O_{i})
\end{align*}
is said to be doubly robust. This estimator is a central component of the
\pkg{polle} package and it is implemented in the function \code{policy_eval()}.

Furthermore, it is possible to show that the doubly robust policy value
estimator is
asymptotically efficient, if the nuisance models ($g_N$ and $Q_N$)
are correctly specified, see \citep{van2003unified}. Specifically,
the centralized score
\begin{align*}
Z(d, g, Q)(O) - \E\left[Z(d,g, Q)(O)\right]
\end{align*}
is the efficient influence function from which we can derive the asymptotic
distributions via central limit theorem arguments. Specifically,
\begin{align*}
N^{{-1}}\sum_{i=1}^{N} \left\{Z(d, g_{N}, Q_{N})(O_{i}) -  \theta_{N} \right\}^2
\end{align*}
is a consistent estimate of $\VAR(U^{d})$. For a recent review of influence functions, see \citep{hines2022demystifying}.

By applying cross-fitting (/sample splitting) of the nuisance models in combination with the doubly robust score,
the nuisance models can be estimated using flexible
machine learning methods without causing asymptotic bias
\citep{chernozhukov2018double}. This functionality is also
implemented in \code{policy_eval}. In Section \ref{sec:pol_eval_K} we present the estimating procedure
of the policy value in detail and generalize it to
multi-category policies over multiple stages.\\

In many situations the aim of the researcher is not just to evaluate a given policy,
but to learn the optimal policy from the data. In \pkg{polle}, this functionality is
implemented in the function \code{policy_learn()}.
The optimal treatment policy $d_0$ is defined as the policy for which it holds
theat $\E[U^d]\leq \E[U^{d_0}]$ for all other policies $d$. Thus, a direct
approach to policy learning is to use \eqref{eq:E_Z_d_K1} as a
loss function and perform value search:
\begin{align*}
d_N = \argmin_{d\in \mathcal{D}} \sum_{i=1}^{N}\tilde L(d)(g_N, Q_n)(O_i) = \argmin_{d \in \mathcal{D}} (-1) \sum_{i = 1}^N  Z(d, g_N, Q_n)(O_i).
\end{align*}
In practice, the complexity of the class of candidate policies $\mathcal{D}$ is
bounded for the value search to be viable. Examples include threshold policies
and policy trees, see \citep{athey2021policy}.
\code{policy_tree(type = 'ptl')} provides a wrapper for
policy tree learning in \pkg{polle}. In our experience, scalability and flexibility
of the current policy tree implementation can be an issue.
In the first part of Section \ref{sec:policylearn} we present the methodology in detail and
generalize it to multiple stages.\\

A key and rather intuitive result is that the optimal policy is also
identified as
\begin{align}
  d_0(h) &= \argmax_{a\in\{0,1\}} \E[U^a|H = h]\\
         &= \argmax_{a\in\{0,1\}} Q_0(h, a). \label{eq:concept_opt_pol}
\end{align}
This result motivates $Q$-learning which rely on estimating the $Q$-function.
The fitted $Q$-function is then plugged into \eqref{eq:concept_opt_pol} to get the associated estimated policy.
The problem with $Q$-learning is that we put too much faith in our ability to model the
$Q$-function. Due to confounding, in order to identify the causal optimal policy,
we have to input the complete history $H$ and estimate the $Q$-function consistently.
However, like value search, in the non-trivial case where $H$ is not discrete,
it is impossible to construct asymptotically reasonable estimators of the $Q$-function
without restricting the complexity of the class of candidate functions
\citep{semenova2021debiased,kennedy2020towards,luedtke2023one,nie2021quasi}.
In a sense, $Q$-learning combines two problems, a causal estimation problem and
a policy estimation problem,
which are hard to solve at the same time.

Luckily, it is possible to decompose the two
problems via the doubly robust score.
In the same process it is also possible to restrict the input to
the policy that we want to optimize. This is useful if we do not want to collect
the full history in a future implementation or if we want to exclude variables which
are unethical to collect or use as input to the policy.

For the purpose of restricting the input to the policy,
let $V\in \mathcal{V}$ be a subset or a function of the history
$H$. Let $d^V:\mathcal{V} \rightarrow \{0,1\}$ denote a $V$-restricted policy.
Finally, let $\mathcal{D}^V$ denote the class of $V$-restricted policies.
The optimal $V$-restricted policy is simply defined as the policy $d^V_0$ for
which it holds that $\E[U^{d^V}]\leq \E[U^{d^V_0}]$ for all
$d^V \in \mathcal{D}^V$. Similarly as above, the
optimal $V$-restricted policy is given by
\begin{align}
  d^V_0(v) &= \argmax_{a\in \{0,1\}} \E\left[U^a \big | V = v\right] \nonumber \\
  &= \argmax_{a\in \{0,1\}} \QV_0(v,a) \nonumber \\
  &= I\left\{ \QV_0(v, 1) - \QV_0(v, 0) > 0 \right\}, \label{eq:qvsinglestage}
\end{align}
see \citep{luedtke2016super}. The $\QV$-function is identified in two ways:
\begin{align*}
\QV_0(V,a) = \E\left[\frac{I\{A=a\}}{g_0(H, a)} U \Big | V \right] = \E\left[Q_0(H, a) | V \right].
\end{align*}
Again, the nuisance models can be combined to create a doubly robust expression
for the optimal $V$-restricted
policy. Define $Z(a, g, Q)$ as $Z(d, g, Q)$ from line \eqref{eq:Z_d_K1} under
the static policy $d(H) = a$.
If either $g = g_0$ or $Q = Q_0$ it holds that
\begin{align*}
\E\left[Z(a,g, Q)(O)\big | V \right] = \QV_0(V,a).
\end{align*}
This result directly inspires a doubly robust regression type estimator for the
$\QV$-function. Specifically, we let $\QV_N$ denote the function with the lowest
empirical mean squared error loss:
\begin{align*}
\QV_N(\cdot, a) &= \argmin_{QV}  \sum_{i=1}^{N} L(QV)(g_N, Q_N)(O_i) \\
&= \argmin_{QV} \sum_{i=1}^{N} \Big\{ Z\big(a, g_N, Q_N\big)\big(O_i\big) - QV\big(V_i, a\big)\Big \}^2.
\end{align*}
As we have detached estimation of the $Q$-function and the $\QV$-function,
we can truly regard the $Q$-function as a nuisance parameter and use flexible
machine learning methods to reduce bias.
At the same time we can also restrict the complexity of the learned $\QV$-function
without loosing causal interpretability. For example, we can let the $\QV$-function
be a member of a Donsker class such as the class of smooth
parametric models \citep{luedtke2020performance}.

In \pkg{polle}, doubly robust $\QV$/$Q$-learning is implemented in \code{policy_learn(type = 'drql')}. In Section \ref{sec:policylearn} we
generalize doubly robust $QV$/$Q$-learning to the multi-stage case.

For binary action sets, it is evident from \eqref{eq:qvsinglestage} that only
the contrast between the $QV$-functions is relevant for learning the optimal
policy. We denote this contrast as the blip, which in the single-stage case
is also known as the conditional average treatment effect (CATE):
\begin{align*}
B_{0}(v) = \QV_0(v, 1) - \QV_0(v, 0) = \E\left[U^{1}-U^{0}|V=v\right]
\end{align*}
A doubly robust loss function for the blip function $B_{0}$ is given by
\begin{align*}
L(B)(g_{0}, Q_{0})(O) = \Big\{ Z\big(1, g_{0}, Q_{0}\big)\big(O\big) - Z\big(0, g_{0}, Q_{0}\big)\big(O\big) - B\big(V\big)\Big\}^{2}.
\end{align*}
which is implemented in \code{policy_learn(type = 'blip')}.

An advantage of doubly robust $Q$-learning and blip learning is that we can take advantage of the extensive available
implementations for (regularized) regression. A downside of $\QV$-learning and blip learning is that these methods
do not target the decision boundary directly.
Instead, they minimize the estimation error of the given function.
This function approximation is then plugged into the threshold function to obtain an estimate of the decision boundary.
An alternative direct approach, like value search, is to use the weighted classification loss function given by
\begin{align*}
  L(d)(g_{0}, Q_{0})(O) = &\left| Z\big(1, g_{0}, Q_{0}\big)\big(O\big) - Z\big(0, g_{0}, Q_{0}\big)\big(O\big) \right| \\
 & \times I \left\{ d(V) \neq I \left\{ Z\big(1, g_{0}, Q_{0}\big)\big(O\big) - Z\big(0, g_{0}, Q_{0}\big)\big(O\big) > 0 \right\}  \right\},
\end{align*}
which we show to be equivalent to the value loss function in Appendix \ref{sec:class_loss}. Usually, the indicator function will be replaced by a convex
surrogate to ease minimization. The \pkg{polle} package wraps the classification functionality of the package \pkg{DTRlearn2} via \code{policy_learn(type='owl')},
though this is not based on the presented doubly robust score, but rather a different augmented score, see \citep{liu2018augmented}.

Since, the policy learners in \pkg{polle} are based on different loss functions,
we need an interpretable performance measure for a fair comparison between the policy learners.
As well as evaluating user-defined policies \code{policy_eval()} also allow for easy evaluation
of an arbitrary policy learner by returning the value of the estimated policy, see Section \ref{sec:policyperformance}
for more details. This is a key feature in \pkg{polle} which is not available in
other \proglang{R} packages.

The final concept that we want to introduce for now is
realistic policy learning. Positivity violations or even near positivity violations is a concern
for both policy learning and evaluation \citep{petersen2012diagnosing}.
If we in some stratum of the history do not observe
both treatments it is impossible to learn the optimal policy in the given
stratum without strong structural assumptions.
Thus we introduce the set of (estimated) realistic actions at level $\alpha$ as
\begin{align*}
D^\alpha_N(h) = \{a\in \mathcal{A}: g_N(h,a)>\alpha \}.
\end{align*}
$Q$-learning can then easily be adapted to the set of realistic actions as
follows
\begin{align*}
d_N(h) = \argmax_{a\in D^\alpha_N(h)} Q_N(h, a).
\end{align*}
For binary action sets, all of the presented policy learners can be adapted to
only recommend actions which are deemed realistic at a given level.

In the next section we formalize all of the above concepts and generalize them
to multiple stages.


\section{Setup and methods} \label{sec:methods}

\subsection{General multi-stage setup} \label{sec:setup}
Let $K \geq 1$ denote a fixed number of stages. Let $B \in \mathcal{B}$ denote the baseline covariates. For a finite set $\mathcal{A}$, let $A_{k} \in \mathcal{A}$ denote the decision or action at stage $k \in \{1,\ldots,K\}$. For $k \in \{1,\ldots,K + 1\}$, let $S_{k} \in \mathcal{S}$ denote the state at stage $k$. The trajectory for an observation can be written as
\begin{align*}
O = (B, S_{1}, A_{1}, S_{2}, A_{2},\ldots, S_{K}, A_{K}, S_{K+1}), 
\end{align*}
as illustrated in Figure \ref{fig:dtr2}. Usually, we will assume to have a sample of $N$ iid observations indexed as $\{O_i\}_{i\in 1...,N}$.
\begin{figure}[htbp]
\centerline{\includegraphics[width=0.65\textwidth]{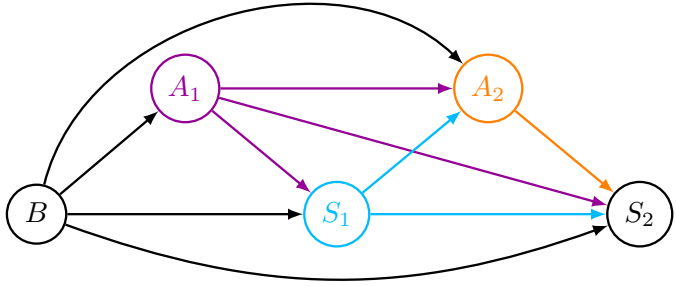}}
\caption{Graph for the observational data with two stages. \(B\) is a baseline
  covariate, \(A_{1}, A_{2}\) are the two decisions at stages 1 and 2, and
  \(S_{1}, S_{2}\) are the state variables. From each of the state variables, a
  reward can be derived, and the sum of these defines the
  utility of the decisions. \label{fig:dtr2}}
\end{figure}
For $k \in \{1,\ldots,K + 1\}$, let $\overline{S}_{k} = (S_{1},\ldots,S_{k})$,
$\overline{A}_{k} = (A_{1},\ldots, A_{k})$ and
$H_{k} = (B, \overline{S}_{k}, \overline{A}_{k-1}) \in \mathcal{H}_k$ define the
history where $A_{0} = A_{K+1} = \emptyset$.
Using the implied ordering, the density of the data can be expressed as
\begin{align}
p_0(O) = p_0(B)\left[ \prod_{k = 1}^{K} p_{0, k}(A_{k}|H_{k}) \right] \left[ \prod_{k = 1}^{K+1} p_{0, k}(S_{k}| H_{k-1}, A_{k-1}) \right]. \label{eq:likelihood_obs}
\end{align}

For convenience, let $S_k = (X_k, U_k)$, where $U_k \in \mathbb{R}$ is the $k$th reward, and $X_k$ is a state covariate/variable for $k \in \{1,\ldots,K\}$ and $X_{K+1} = \emptyset$. The utility is the sum of the rewards
\begin{align*}
U = \sum_{k = 1}^{K+1} U_{k}.
\end{align*}
In the \pkg{polle} package the function \code{policy_data()} helps the user specify the above data structure.

\subsection{Policy value estimation} \label{sec:pol_eval_K}
A policy is a set of rules $d = (d_1, ..., d_K)$, $d_k: \mathcal{H}_k \mapsto \mathcal{A}$ assigning an action in each stage.
Let $D_{0,k}(h_k)\subseteq \mathcal{A}$ denote the feasible set of decisions at stage $k$ for history $h_k$ under $P_0$, i.e.,
\begin{align*}
D_{0,k}(h_k) = \{a_k \in  \mathcal{A}: p_{0, k}(a_k|h_k) > 0\}.
\end{align*}
Define the class of feasible policies $\mathcal{D}_0$ as all sets of rules satisfying $d_k(h_k)\in D_{0,k}(h_k)$.

For a feasible policy $d$, let $P^d_0$ be the distribution with density
\begin{align}
p^{d}_0(O) = p_0(B)\left[ \prod_{k = 1}^{K} I\{A_k = d_k(H_k)\} \right] \left[ \prod_{k = 1}^{K+1} p_{0, k}(S_{k}| H_{k-1}, A_{k-1}) \right]. \label{eq:likelihood_d}
\end{align}

Let $O^{d}$ denote the data with distribution given by \eqref{eq:likelihood_d}, which is identified from the observed data. Define the value of the policy as
\begin{align*}
\theta^d_0 = \E[U^{d}].
\end{align*}
Under consistency and sequential randomization the above value will have a causal interpretation as the mean utility under an intervention given by the feasible policy.\\

The value of a feasible policy $d$ can explicitly be stated via the $Q$-functions recursively defined as
\begin{align*}
Q_{0,K}(h_K,a_K) = \E[U\mid H_{K} = h_{K}, A_K = a_K] 
\end{align*}
\begin{align*}
Q^{\underline{d}_{k+1}}_{0,k}(h_k, a_k) = \E\left[  Q^{\underline{d}_{k+2}}_{0,k+1}(H_{k+1}, d_{k+1}(H_{k+1})) \mid H_k = h_k, A_k = a_k  \right], \quad k \in \{1,\ldots,K-1\} 
\end{align*}
where $\underline{d}_k = (d_k, ..., d_K)$. It is possible to show that the
target parameter is identified as
\begin{align*}
\theta^d_0 = \E[Q^d_{0,1}(H_1, d_1(H_1)].
\end{align*}
The recursive structure of the $Q$-functions directly inspires a recursive regression procedure resulting in an estimate $Q^d_{N,1}$, based on $N$ iid observations. The value can then be estimated as the empirical mean of $Q^d_{N,1}(H_1, d_1(H_1)$.\\

The value of a feasible policy $d$ can also be stated via the $g$-functions defined as
\begin{align*}
g_{0,k}(h_k, a_k) = p_{0, k}(a_k\mid h_k),
\end{align*}
for $k \in \{1,\ldots,K\}$. Again, it is possible to show that
\begin{align*}
\theta^d_0 = \E\left[ \left( \prod_{k = 1}^{K} \frac{I\{A_{k} = d_k(H_{k})\}}{g_{0,k}(H_{k}, A_{k})} \right) U \right].
\end{align*}
Given regression estimates $g_{N,k}$, the value can now be estimated as the weighted empirical mean of the observed utilities.\\

Finally, it is possible to combine the two estimation approaches. Define the doubly robust policy score at stage $k$ as
\begin{align}
Z_k({\underline{d}_{k}},& g, Q^{\underline{d}_{k+1}})(O) \nonumber \\
&= \,Q_{k}^{\underline{d}_{k+1}}(H_k , d_k(H_k)) \nonumber \\
&+ \sum_{r = k}^K \left\{ \prod_{j = k}^{r} \frac{I\{A_j = d_j(H_{j})\}}{g_{j}(H_j, A_j)} \right\}\left\{Q_{r+1}^{\underline{d}_{r+2}}(H_{r+1} , d_{r+1}(H_{r+1})) - Q_{r}^{\underline{d}_{r+1}}(H_r , d_r(H_r))\right\}, \label{eq:Z_score_d_k}
\end{align}
where $Q_{K+1}(H_{K + 1} , d_{K+1}(H_{K+1})) = U$. It is possible to show that $\E[Z_1(d, g, Q^d)(O)] = \theta_0^d$ if either $g = g_0$ or $Q^d = Q^d_0$, see
for example \cite{tsiatis2019dynamic}. This result directly inspires a doubly robust moment type estimator of the policy value, see Algorithm \ref{alg:cross_fitted_value}.
In \pkg{polle}, this estimator is implemented in \code{policy_eval(type ='dr', policy)}.

\begin{algorithm}[]\label{alg:cross_fitted_value}
    \SetAlgoLined
    \DontPrintSemicolon
    \SetKwFunction{split}{M-folds}
    \SetKwInOut{Input}{input}
    \SetKwInOut{Output}{output}

    \Input{Data set with iid observations \(\mathcal{O} = (O_{1},\ldots,O_{N})\)\\ Feasible policy \(d\)\\
    Action probability regression procedure $\hat{g}$\\
    Outcome regression procedure $\hat{Q}^d$}
    \Output{Value estimate \(\theta^{d}_{N}\)\\
    Variance estimate $\Sigma^d_N$}
    \vspace*{1em}

    \(\{\mathcal{O}_1,\dots,\mathcal{O}_M\}\) = \split(\(\mathcal{O}\))\;

    \ForEach{\(m\in\{1,\ldots,M\}\)}{
    $g_m = \hat{g}(\mathcal{O} \setminus  \mathcal{O}_m)$\;
    $Q^d_m = \hat{Q}^d(\mathcal{O} \setminus  \mathcal{O}_m)$\;
    $\mathcal{Z}_{1,m} = \{Z_1(d, g_m, Q_m^d)(O): O\in \mathcal{O}_m \}$ \;
    }
    $\theta^d_N = N^{-1} \sum_{m = 1}^M \sum_{Z \in \mathcal{Z}_{1,m} } Z$ \;
    $\Sigma^d_N = N^{-1} \sum_{m = 1}^M \sum_{Z \in \mathcal{Z}_{1,m} } (Z - \theta^d_N)^2$\:

\caption{Cross-fitted doubly robust estimator of the policy value $\theta^d_0$}
\end{algorithm}

It is well known that $\psi_0^d(O) = Z_1(d, g_0, Q^d_0)(O) - \theta_0^d$ is the efficient influence function/curve for the policy value. We assume that the absolute utility is bounded and that $g_{k,0}(H_k, A_K) > \epsilon$ almost surely for some $\epsilon>0$. Let $\lVert g \rVert_{P,2} = \max_{j \in \{1, ...,K\}} \lVert g_j \rVert_{P,2}$ and $\lVert Q^d \rVert_{P,2} = \max_{j \in \{1, ...,K\}} \lVert Q^{\underline{d}_{j+1}}_j \rVert_{P,2}$. If, with probability converging to one, $g_{k,m}(H_k, A_k)> \epsilon$ and
\begin{align*}
\lVert g_m - g_0 \rVert_{P_0,2} =& o_{P_0}(1)\\
\lVert Q_m^d - Q^d_0 \rVert_{P_0,2} =& o_{P_0}(1)\\
\lVert g_m - g_0 \rVert_{P_0,2} \times \lVert Q_m^d - Q^d_0 \rVert_{P_0,2} =& o_{P_0}(N^{-1/2}),
\end{align*}
then
\begin{align*}
N^{1/2}(\theta^d_N - \theta^d_0) = N^{-1/2}\sum_{i = 1}^N \psi_0^d(O_i) + o_{P_0}(1).
\end{align*}
Thus, $\Sigma^d_N$ from Algorithm \ref{alg:cross_fitted_value} is a good estimate of the asymptotic variance of the value estimate if the nuisance models $\hat g$ and $\hat Q^d$ are correctly specified. It is important to note that the convergence rate conditions are relatively weak. For example, $Q^d_0$ and $g_0$ may be estimated at rate $o_{P_{0}}(N^{-1/4})$, which is much lower than a parametric rate of order $o_{P_{0}}(N^{-1/2})$. This result justifies the use of adaptive and regularized nuisance models, see \cite{chernozhukov2018double}.

\subsection{Policy learning}\label{sec:policylearn}

One of the main objective of \pkg{polle} is to learn the optimal policy from data.
As we want to be able to control the policy input,
we start by defining the optimal policy within a class of policies that are restricted to a subset of the observed history.
The following result is a generalization of \cite{van2014targeted}.\\

Let $V_k$ be a function (or subset) of $H_k$. A $V$-restricted policy is a set of rules $d^V = (d_{1}^V, ..., d_{K}^V)$, $d_{k}^V: \overline{\mathcal{A}}_{k-1}\times \mathcal{V}_k \mapsto \mathcal{A}$. Let $\mathcal{D}^V$ denote the class of $V$-restricted policies. Under positivity, i.e., $D_{0,k}(H_k) = \mathcal{A}$ almost surely, the $V$-optimal policy is defined as
\begin{align*}
d^V_0 = \arg \max_{d\in \mathcal{D}^V} \E[U^d].
\end{align*}

The following theorem specifies the $V$-optimal policy in a recursive manner. A proof for the two-stage case can be found in Appendix \ref{sec:v_optimal_policy}.
\begin{theorem} \label{theo:optimal_v_policy_K}
Under positivity, for any $a = (a_1, ..., a_K)$ and policy $d$ define
\begin{align}
\QV_{0,K}(\overline{a}_{K-1}, v_K, a_K) &= \E[U^{\overline{a}_{K}} | V_K^{\overline{a}_{K-1}} = v_K ] \label{eq:W_d_K},\\
\QV^{\underline{d}_{k+1}}_{0,k}(\overline{a}_{k-1}, v_k, a_k) &= \E[U^{\overline{a}_{k}, \underline{d}_{k+1}}| V_k^{\overline{a}_{k-1}} = v_k ] \quad k \in \{1,..., K-1\}. \label{eq:W_d_k}
\end{align}
If
\begin{align}
\E[U^{a} | V_1, ..., V_k^{\overline{a}_{k-1}}] = \E[U^{a} | V_k^{\overline{a}_{k-1}}], \quad k \in \{1,..., K\}, \label{eq:optimal_v_policy_k_condition}
\end{align}
then the $V$-optimal policy $d^V_{0}$ is recursively given by
\begin{align*}
d^V_{0,K}(\overline{a}_{K-1}, v_K) &= \arg\max_{a_K} \QV_{0,K}(\overline{a}_{K-1}, v_K, a_K),\\
d^V_{0,k}(\overline{a}_{k-1}, v_k) &= \arg\max_{a_k} \QV^{\underline{d}^V_{0, k+1}}_{0,k}(\overline{a}_{k-1}, v_k, a_k) \quad k \in \{1,..., K-1\}.
\end{align*}
\end{theorem}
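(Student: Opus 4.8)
The plan is to establish the recursion by backward induction over the stages: at each stage I will show that the pointwise maximizer of the corresponding $\QV$-function is the optimal decision rule among all admissible $V$-restricted continuations, and that condition \eqref{eq:optimal_v_policy_k_condition} is exactly what guarantees this maximizer can be written as a function of $(\overline{a}_{k-1}, v_k)$ alone. Positivity enters throughout to ensure that every static action sequence $\overline{a}$ retains positive density, so that the potential outcomes and the value functions \eqref{eq:W_d_K}--\eqref{eq:W_d_k} are identified and well defined for all $\overline{a}$.

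First I would treat the terminal stage. Fix arbitrary $V$-restricted rules $(d_1^V, \ldots, d_{K-1}^V)$ for the first $K-1$ stages, generating random actions $\overline{A}_{K-1}$ and the associated intervened covariate $V_K^{\overline{A}_{K-1}}$. By consistency and the tower property, the value of any policy with terminal rule $d_K^V$ equals $\E[\QV_{0,K}(\overline{A}_{K-1}, V_K^{\overline{A}_{K-1}}, d_K^V(\overline{A}_{K-1}, V_K^{\overline{A}_{K-1}}))]$, using the definition \eqref{eq:W_d_K}. Since $d_{0,K}^V(\overline{a}_{K-1}, v_K) = \arg\max_{a_K}\QV_{0,K}(\overline{a}_{K-1}, v_K, a_K)$ maximizes the integrand realization-by-realization and is itself an admissible $V$-restricted rule, it weakly dominates every alternative terminal rule, and this holds irrespective of the earlier choices.

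Then I would carry out the inductive step. Assume the backward-induction rules $\underline{d}^V_{0,k+1}$ form an optimal continuation for stages $k+1, \ldots, K$ given any fixed earlier rules. Fixing arbitrary rules for stages $1, \ldots, k-1$ and substituting $\underline{d}^V_{0,k+1}$ thereafter, the policy value as a function of the stage-$k$ rule becomes, by the tower property, $\E[\QV^{\underline{d}^V_{0,k+1}}_{0,k}(\overline{A}_{k-1}, V_k^{\overline{A}_{k-1}}, d_k^V(\overline{A}_{k-1}, V_k^{\overline{A}_{k-1}}))]$ with $\QV^{\underline{d}^V_{0,k+1}}_{0,k}$ as in \eqref{eq:W_d_k}. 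The integrand is maximized pointwise by $d_{0,k}^V(\overline{a}_{k-1}, v_k) = \arg\max_{a_k}\QV^{\underline{d}^V_{0,k+1}}_{0,k}(\overline{a}_{k-1}, v_k, a_k)$, which is again $V$-restricted, establishing that $\underline{d}^V_{0,k}$ is an optimal continuation from stage $k$ onward given any fixed earlier rules. Because each stage's rule is optimal against \emph{arbitrary} earlier choices, one can replace the rules of any $V$-restricted policy by the backward-induction rules one stage at a time, from $k=K$ down to $k=1$, without ever decreasing the value; this exchange argument yields the stated recursion and shows $d_0^V = (d_{0,1}^V, \ldots, d_{0,K}^V)$ is globally $V$-optimal.

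The hard part will be justifying that the stage-$k$ value function genuinely collapses to a function of $(\overline{a}_{k-1}, v_k, a_k)$ rather than of the full intervened history $V_1, \ldots, V_k^{\overline{a}_{k-1}}$, which is precisely where condition \eqref{eq:optimal_v_policy_k_condition} must be invoked. Since that condition is stated for \emph{static} interventions $a = (a_1, \ldots, a_K)$ whereas the continuation $\underline{d}^V_{0,k+1}$ is a \emph{dynamic} $V$-restricted policy, I would first condition on the realized downstream covariates $V_{k+1}^{\overline{a}_k}, \ldots, V_K$, along each path of which the continuation reduces to a fixed static action sequence; applying \eqref{eq:optimal_v_policy_k_condition} for each such sequence and integrating back then permits the earlier $V$'s to be dropped from the conditioning one stage at a time. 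The delicate points are verifying that substituting a data-dependent but $V$-measurable continuation rule preserves the conditional-independence structure and that the nested expectations remain well defined under positivity. Once this reduction is in place, the pointwise maximization argument above closes the proof.
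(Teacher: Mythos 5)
Your proposal is correct and follows essentially the same route as the paper's proof (given in Appendix~\ref{sec:v_optimal_policy} for the two-stage case): a backward stagewise-exchange argument with pointwise maximization of the $\QV$-functions, where the dynamic continuation is handled by reducing it path-by-path to static action sequences --- exactly what the paper does by writing $U^{d} = \sum_{a_1,a_2} U^{a_1,a_2} I\{d_2(a_1, V_2^{a_1}) = a_2\} I\{d_1(V_1) = a_1\}$ and invoking condition \eqref{eq:optimal_v_policy_k_condition} (or the nestedness of the $V_k$'s) to collapse the conditioning onto $V_k^{\overline{a}_{k-1}}$ alone. Your version merely spells out the general-$K$ induction that the paper states but only proves for $K=2$.
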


If for all $k\in \{1,...,K\}$ and $r<k$, $V_r^{\overline{a}_{r-1}}$  is a function
of $V_k^{\overline{a}_{k-1}}$ then \eqref{eq:optimal_v_policy_k_condition} holds by construction. The intuition behind condition
\eqref{eq:optimal_v_policy_k_condition} is that it will ensure that the tower property holds for
the nested conditional expectations \eqref{eq:W_d_K} and  \eqref{eq:W_d_k}.
Finally, note that only future rewards affects the optimal decision at stage $k$ since
\begin{align*}
\arg\max_{a_k} \QV^{\underline{d}^V_{0, k+1}}_{0,k}&(\overline{a}_{k-1}, v_k, a_k)\\
=& \arg\max_{a_k} \E\left[U_1 + ... + U_{k}^{\overline{a}_{k-1}} + U_{k+1}^{\overline{a}_k} + ... + U_{K+1}^{\overline{a}_k, \underline{d}^V_{0,k+1}} \big | V_k^{\overline{a}_{k-1}} = v_k\right]\\
=& \arg\max_{a_k} \E\left[U_{k+1}^{\overline{a}_k} + ... + U_{K+1}^{\overline{a}_k, \underline{d}^V_{0,k+1}} \big | V_k^{\overline{a}_{k-1}} = v_k\right].
\end{align*}
The basis for learning the $V$-restricted optimal policy is to construct an observed data loss function
which identifies $d^V_0$, i.e., construct a function $L$ for which $\E[L(d)(O)]$ is minimized in $d^V_0$.
Various loss functions inspires different algorithms for estimating the $V$-optimal policy.
In the following, we present four different doubly robust loss functions, a value, quality, blip, and classification loss function.

\subsubsection{Value search}

For the final stage $K$ consider the loss function $\tilde L_{K}(d_K)(g_K, Q_K)(O)$ in $d_K$ given by
\begin{align}
-\tilde L_{K}(d_K)(g_K, Q_K)(O) &= Z_{K}(d_K, g, Q)(O) \nonumber \\
&= Q_{K}(H_k, d_K(H_k)) +  \frac{I\{A_K = d_K(H_k)\}}{g_{K}(H_K, A_K)} \left\{U - Q_{K}(H_K, A_K)\right\} \label{eq:valuelossK}
\end{align}
If either $Q_K = Q_{0,K}$ or $g_K = g_{0,K}$, then
\begin{align*}
\E\left[\tilde L_{K}(d_K)(g_K, Q_K)(O)\right] = -\E\left[U^{d_K}\right].
\end{align*}
Thus, for a $V$-restricted policy
\begin{align*}
\E\left[\tilde L_{K}(d^V_K)(g_K, Q_K)(O)\right] = -\E\left[\QV_{0,K}\left(\overline{A}_{K-1}, V_K, d^V_K(\overline{A}_{K-1}, V_K)\right)\right],
\end{align*}
meaning that over the class of $V$-restricted policies $\mathcal{D}^V_K$ the expected loss is minimized in $d^V_{0,K}$ by Theorem \ref{theo:optimal_v_policy_K}.\\

For stage $k \in \{1, ..., K-1\}$ consider the loss function $\tilde L_{k}(d_k)(\underline{d}_{k+1}, g, Q^{\underline{d}_{k+1}})(O)$ in $d_k$ given by
\begin{align*}
-\tilde L_{k}(d_k)(\underline{d}_{k+1}, g, Q^{\underline{d}_{k+1}})(O) = Z_{k}([d_k,\underline{d}_{k+1}], g, Q^{\underline{d}_{k+1}})(O).
\end{align*}
If either $Q^{\underline{d}_{k+1}} = Q^{\underline{d}_{k+1}}_{0}$ or $g = g_{0}$, then
\begin{align*}
E\left[\tilde L_{k}(d_k)(\underline{d}_{k+1}, g, Q^{\underline{d}_{k+1}})(O)\right] = -E\left[U^{d_k, \underline{d}_{k+1}}\right],
\end{align*}
and for a $V$-restricted policy at stage $k$ it holds that
\begin{align*}
E\left[\tilde L_{k}(d^V_k)(\underline{d}_{k+1}, g, Q^{\underline{d}_{k+1}})(O)\right] = -E\left[\QV^{\underline{d}_{k+1}}_{0,k}(\overline{A}_{k-1}, V_k, d^V_k(\overline{A}_{k-1}, V_k)\right]
\end{align*}
Thus, given $\underline{d}^V_{0,k+1}$, the above expected loss over $\mathcal{D}^V_k$ is again minimized in $d^V_{0,k}$ by Theorem \ref{theo:optimal_v_policy_K}.\\

The constructed loss function directly inspires recursive value search, see Algorithm \ref{alg:sequential_value_search}. Similar to Algorithm \ref{alg:cross_fitted_value}, the algorithm utilizes cross-fitted values of the nuisance models at each step. However, it is important to note that the $Q$-models are not truly cross-fitted (except for the last stage), because the fitted policy at a given stage depends on the fitted policy at later stages. A nested cross-fitting scheme would be required to make the folds (used to fit the $Q$-models) independent.

\begin{algorithm}[]\label{alg:sequential_value_search}
    \SetAlgoLined
    \DontPrintSemicolon
    \SetKwFunction{split}{L-folds}
    \SetKwInOut{Input}{input}
    \SetKwInOut{Output}{output}
    \newcommand{\forcond}{$k=K$ \KwTo $1$}

    \Input{Data set with iid observations \(\mathcal{O} = (O_{1},\ldots,O_{N})\)\\
          Class of $V$-restricted policies $\mathcal{D}^V$\\
          Function class minimization procedure $\hat F$\\
          Action probability regression procedure $\hat g$\\
          Outcome regression procedure $\hat Q = \{\hat Q_1, ..., \hat Q_K\}$}
    \Output{$V$-restricted optimal policy estimate $d^V_N$}
    \vspace*{1em}

    \(\{\mathcal{O}_1,\dots,\mathcal{O}_L\}\) = \split(\(\mathcal{O}\))\;
     \ForEach{\(l\in\{1,\ldots,L\}\)}{
        $g_l = \hat{g}(\mathcal{O} \setminus  \mathcal{O}_l)$\;
     }
    \For{\forcond}{
        \ForEach{\(l\in\{1,\ldots,L\}\)}{
          $Q^{\underline{d}^V_{N, k+1}}_{l, k} = \hat{Q}_k^{\underline{d}^V_{N, k+1}}\left(\left\{H_k, A_k, Q^{\underline{d}^V_{N, k+2}}_{k+1, l}\left(H_{k+1}, d^V_{N, k+1}(H_{k+1})\right): O \in \mathcal{O} \setminus  \mathcal{O}_l \right\}\right)$\;
      }
      $d^V_{N, k} =  \hat F_{d^V_k \in \mathcal{D}^V}\left( \sum_{l=1}^L \sum_{O \in \mathcal{O}_l} \tilde{L}(d^V_k)\big(\underline{d}^V_{k+1, N}, g_l, Q^{\underline{d}^V_{k+1,N}}_{l}\big)(O) \right)$ \;
    }
\caption{Recursive Value Search}
\end{algorithm}

Algorithm \ref{alg:sequential_value_search} requires a suitable function class minimization procedure $\hat F$ imitating
$\arg\min_{d^V_k \in \mathcal{D}^V}\{\cdot\}$ at every stage. The \proglang{R} package \pkg{policytree},
see \cite{sverdrup2020policytree}, implements such a minimization procedure where the class of policies is given by
decision trees. See \cite{zhou2018offline} for theoretical results related to this implementation.
In \pkg{polle}, recursive value search using \pkg{policytree} is implemented in \code{policy_learn(type = 'ptl')}.

\subsubsection{Quality learning}

Under positivity, for any $a = (a_1, ..., a_K)$ and any policy $d$, let $Z_{k}([a_k,\underline{d}_{k+1}], g, Q^{\underline{d}_{k+1}})(O)$ be given by \eqref{eq:Z_score_d_k} with $d_k$ replaced by the static policy $a_k \in \mathcal{A}$. For the final stage $K$, define $\QV_{0,K}(a_K)(\overline{a}_{K-1}, v_K) = \QV_{0,K}(\overline{a}_{K-1}, v_K, a_K)$  from equation \eqref{eq:W_d_K}. If $g_K = g_{0,K}$ or $Q_K = Q_{0,K}$ then
\begin{align*}
\E\left[Z_{K}(a_K, g, Q)(O)\big| \overline{A}_{K-1}, V_K \right] = \QV_{0,K}(a_K)(\overline{A}_{K-1}, V_K).
\end{align*}
Now, a valid loss function for $\QV_{0,K}(a_K)$ over functions $\QV_K: \overline{\mathcal{A}}_{K-1}\times \mathcal{V}_K \mapsto \mathcal{A}$ is given by
\begin{align*}
L_{K}(\QV_K)(a_K, g, Q)(O) = \left \{Z_{K}(a_K, g, Q)(O) - \QV_{K}(\overline{A}_{K-1}, V_K)\right \}^2.
\end{align*}
Hence, any regression type estimator which minimizes the (empirical) mean squared error can be used to estimate $\QV_{0,K}(a_K)$. This can be repeated for every $a_K \in \mathcal{A}$. By Theorem \ref{theo:optimal_v_policy_K}, the $V$-optimal policy $d^V_{0, K}$ is then identified as $\arg\max_{a_K\in \mathcal{A} }\QV_{0,K}(a_K)$.\\

For $k \in \{1,..., K-1\}$, let $\QV^{\underline{d}_{k+1}}_{0,k}(a_K)(\overline{a}_{k-1}, v_k) = \QV^{\underline{d}_{k+1}}_{0,k}(\overline{a}_{k-1}, v_k, a_k)$ from equation \eqref{eq:W_d_k}. If $g = g_0$ or $Q^{\underline{d}_{k+1}} = Q^{\underline{d}_{k+1}}_0$ then
\begin{align*}
\E\left[Z_{k}([a_k, \underline{d}_{k+1}], g, Q^{\underline{d}_{k+1}})(O) \big| \overline{A}_{k-1}, V_k\right] = \QV^{\underline{d}_{k+1}}_{0,k}(a_k)(\overline{A}_{k-1}, V_k),
\end{align*}
and a valid loss function for $\QV^{\underline{d}_{k+1}}_{0,k}(a_k)$ over functions $\QV_k$ is given by
\begin{align*}
L_{k}(\QV_k)(a_k, \underline{d}_{k+1}, g, Q^{\underline{d}_{k+1}})(O) = \left \{ Z_{k}([a_k, \underline{d}_{k+1}], g, Q^{\underline{d}_{k+1}})(O) - \QV_{k}(\overline{A}_{k-1}, V_k)\right\}^2.
\end{align*}
Thus, given the future $V$-restricted optimal policy rules $\underline{d}^V_{0, k+1}$, if $g = g_0$ or $Q^{\underline{d}^V_{0, k+1}} = Q^{\underline{d}^V_{0, k+1}}_0$, then the expected loss is minimized in $\QV^{\underline{d}^V_{0, k+1}}_{0,k}(a_k)$. Again, this can be repeated for each $a_k \in \mathcal{A}$ and the $V$-optimal policy at stage $k$ is identified as $\arg\max_{a_k\in \mathcal{A}} \QV^{\underline{d}^V_{0, k+1}}_{0,k}(a_k)$.\\
The constructed quality loss function directly inspires doubly robust $V$-restricted $Q$-learning,
see Algorithm \ref{alg:dr_QV_learning}. In \pkg{polle} this policy estimator is implemented in \code{policy_learn(type = 'drql')}.

\begin{algorithm}[]\label{alg:dr_QV_learning}
    \SetAlgoLined
    \DontPrintSemicolon
    \SetKwFunction{split}{L-folds}
    \SetKwInOut{Input}{input}
    \SetKwInOut{Output}{output}
    \newcommand{\forcond}{$k=K$ \KwTo $1$}

    \Input{Data set with iid observations \(\mathcal{O} = (O_{1},\ldots,O_{N})\)\\
          Action probability regression procedure $\hat g$\\
          Outcome regression procedure $\hat Q = \{\hat Q_1, ..., \hat Q_K\}$\\
          Outcome regression procedure $\widehat{\QV} = \{\widehat{\QV}_1, ..., \widehat{\QV}_K\}$}
    \Output{$V$-restricted optimal policy estimate $d^V_N$}
    \vspace*{1em}

    \(\{\mathcal{O}_1,\dots,\mathcal{O}_L\}\) = \split(\(\mathcal{O}\))\;
     \ForEach{\(l\in\{1,\ldots,L\}\)}{
        $g_l = \hat{g}(\mathcal{O} \setminus  \mathcal{O}_l)$\;
     }
    \For{\forcond}{
        \ForEach{\(l\in\{1,\ldots,L\}\)}{
          $Q^{\underline{d}^V_{N,k+1}}_{k, l} = \hat{Q}_k\left(\left\{H_k, A_k, Q^{\underline{d}^V_{N,k+2}}_{k+1,l}\left(H_{k+1}, d^V_{N, k+1}(H_{k+1})\right): O \in \mathcal{O} \setminus  \mathcal{O}_l \right\}\right)$\;
          \ForEach{$a_k \in \mathcal{A}$}{
          $\widetilde{\mathcal{O}}_{k,l}(a_k) = \left\{\overline{A}_{k-1}, V_k, Z_k\left([a_k, \underline{d}^V_{N,k+1}], g_l, Q_l^{\underline{d}^V_{N, k+1}}\right)(O): O\in \mathcal{O}_l \right\}$ \;}
    }
    \ForEach{$a_k \in \mathcal{A}$}{
    $\QV^{\underline{d}^V_{N, k+1}}_{N, k}(a_k) = \widehat{\QV}_k\left(\{\widetilde{O} \in \widetilde{\mathcal{O}}_{k,l}(a_k): l = 1,...,L\}\right)$ \;
    }
    $d^V_{N,k} = \arg \max_{a_k \in \mathcal{A}} \QV^{\underline{d}^V_{N, k+1}}_{N, k}(a_k)$\;
    }
\caption{Doubly Robust $Q$-learning}
\end{algorithm}

\subsubsection{Blip learning}

As outlined in Section \ref{sec:concepts}, for a binary action set $\mathcal{A} = \{0,1\}$, instead of learning the $\QV$-function for each action,
is it sufficient to learn the contrast or blip between the functions. Considering Theorem \ref{theo:optimal_v_policy_K}, define
the blips as
\begin{align*}
B_{0,K}(\overline{a}_{K-1}, v_K) = \QV_{0,K}(\overline{a}_{K-1}, v_K, 1) -  \QV_{0,K}(\overline{a}_{K-1}, v_K, 0),\\
B_{0, k}^{\underline{d}_{k+1}}(\overline{a}_{k-1}, v_k) = \QV^{\underline{d}_{k+1}}_{0,k}(\overline{a}_{k-1}, v_k, 1) - \QV^{\underline{d}_{k+1}}_{0,k}(\overline{a}_{k-1}, v_k, 0).
\end{align*}
Using the blips, the $V$-restricted optimal policy is identified as
\begin{align*}
d^V_{0,K}(\overline{a}_{K-1}, v_K) &= I \left\{ B_{0,K}(\overline{a}_{K-1}, v_K) > 0 \right\},\\
d^V_{0,k}(\overline{a}_{k-1}, v_k) &= I \left\{ B^{\underline{d}^V_{0, k+1}}_{0,k}(\overline{a}_{k-1}, v_k) > 0 \right\} \quad k \in \{1,..., K-1\}.
\end{align*}
Doubly robust blip learning is now almost identical to Algorithm \ref{alg:dr_QV_learning}, with $\widehat QV$ replaced by $\hat B$ and
the doubly robust scores $Z_k\left([a_k, \underline{d}^V_{k+1}], g, Q^{\underline{d}^V_{k+1}} \right)(O)$ replaced by the contrasts
$$
W_{k}(\underline{d}^V_{k+1}, g, Q^{\underline{d}^V_{k+1}})(O) = Z_k([1, \underline{d}^V_{k+1}], g, Q^{\underline{d}^V_{k+1}})(O)- Z_k([0, \underline{d}^V_{k+1}], g, Q^{\underline{d}^V_{k+1}} )(O).
$$
In \pkg{polle}, blip learning is implemented in \code{policy_learn(type = 'blip')}.

\subsubsection{Weighted classification}

In this section we again consider the case where the action set is binary, i.e.,
$\mathcal{A} = \{0,1\}$. As shown in Appendix \ref{sec:class_loss}, a valid weighted classification
(0-1) loss function is given by
\begin{align}
L_{k}(d_k)(\underline{d}_{k+1}, g, Q^{\underline{d}_{k+1}})(O) = &\left \lvert W_{k}(\underline{d}_{k+1}, g, Q^{\underline{d}_{k+1}})(O) \right \rvert \nonumber \\
& \times I\Big\{d_k(\overline{A}_{K-1}, V_k) \neq  I \Big \{W_{k}(\underline{d}_{k+1}, g, Q^{\underline{d}_{k+1}})(O) > 0 \Big\}\Big \}. \label{eq:weighted_loss_function_k}
\end{align}
Given the future $V$-optimal policy rules $\underline{d}^V_{0, k+1}$, if $g = g_0$ or $Q^{\underline{d}^V_{0, k+1}} = Q^{\underline{d}^V_{0, k+1}}_0$,
the expected weighted classification loss function is minimized in $d^V_{0, k}$ over $\mathcal{D}^V_k$.

It can be challenging to perform minimization of the weighted classification loss function due to the
discontinuity of the indicator function. For this reason, it is common to use a convex surrogate
of the indicator function. Let $f_k: \overline{\mathcal{A}}_{K-1}, \times \mathcal{V}_k \mapsto \mathbb{R}$ be some action function
corresponding to $d_k$, i.e., $d_k(\overline{A}_{K-1}, V_k) = I\{f_k(\overline{A}_{K-1},V_k) >0\}$. Then
\begin{align*}
{L}_{k}(f_k)(\underline{d}_{k+1}, g, Q^{\underline{d}_{k+1}})(O) =& \left \lvert {W}_{k}(\underline{d}_{k+1}, g, Q^{\underline{d}_{k+1}})(O)\right \rvert \\
& \times I \left\{ f_k(\overline{A}_{K-1},V_k)\left[2 I \left \{{W}_{k}(\underline{d}_{k+1}, g, Q^{\underline{d}_{k+1}})(O)> 0\right\} -1 \right] \leq 0 \right\}
\end{align*}
is equivalent to \eqref{eq:weighted_loss_function_k}. Replacing $I\{x\leq 0\}$ in the above expression with a convex surrogate $\phi: \mathbb{R}\mapsto [0, \infty)$ differentiable in $0$ with $\phi'(0)<0$ yields a convex loss function given by
\begin{align*}
{L}_{k}^{\phi}(f_k)(\underline{d}_{k+1}, g, Q^d)(O) =& \left \lvert {W}_{k}(\underline{d}_{k+1}, g, Q^{\underline{d}_{k+1}})(O)\right \rvert\\
& \times \phi \left( f_k(\overline{A}_{K-1},V_k)\left[2 I \left \{W_{k}(\underline{d}_{k+1}, g, Q^{\underline{d}_{k+1}})(O)> 0\right\} -1 \right]\right).
\end{align*}
If $g = g_0$ or $Q^{d^V_0} = Q^{d^V_0}_0$ and the non-exceptional law holds, i.e., that
\begin{align}
0 < \E \left[W_{k}(\underline{d}_{k+1}, g, Q^{\underline{d}_{k+1}})(O)\right], \label{eq:non-excep}
\end{align}
then the expected weighted surrogate loss function is minimized in $f^V_{0,k}$ over the class of $V$-restricted action functions and $d^V_{0,k} = I\{f^V_{0,k} > 0\}$. The above result directly inspires recursive learning of
the restricted optimal policy using weighted classification methods similar to
Algorithm \ref{alg:sequential_value_search}.

The classification perspective was first established by \cite{zhao2012estimating}
and \cite{zhang2012estimating}. Various methods within this approach has been
implemented in the \proglang{R} packages \pkg{DTRlearn2} and \pkg{DynTxRegime}, see \cite{cran:DTRlearn2}
and \cite{cran:dyntxregime}. Generalizations to multiple actions (more than two)
has also been developed, see \cite{zhang2020multicategory}.

\pkg{polle} wraps the weighted classification function \code{owl()} from the \pkg{DTRlearn2} package, though
this algorithm is not based on the presented doubly robust blip score, but a different augmented score, see \citep{liu2018augmented}.
The policy learner is available via \code{policy_learn(type = 'owl')}.

\subsection{Learned policy performance} \label{sec:policyperformance}

Another main objective of \pkg{polle} is to be able to compare the performance of various policy learners.
For this purpose we advocate for targeting the estimated policy's value \citep{chakraborty2014inference}.
Conditional on a policy estimate $d_{N}$ the target parameter is given by
\begin{align*}
\theta_{0}^{d_{N}}= \E[U^{d_{N}}].
\end{align*}
This data-adaptive parameter arguably more practically relevant than the true optimal policy value
because the true optimal policy will never actually be implemented. Moreover the data-adaptive policy
value does not suffer from non-regularity issues related to the optimal policy value, see
\citep{hirano2012impossibility, luedtke2016statistical, robins2014discussion, chakraborty2013statistical}.

Under regularity conditions similar to those stated in Section \ref{sec:pol_eval_K}, see \citep{van2014targeted},
if $d_{N}$ has a limiting policy $d'$ then the efficient influence function for $\theta_{0}^{d_{N}}$ is given
by
\begin{align*}
Z_{1}(d', g_{0}, Q_{0}^{d'})(0)-\theta_{0}^{d'}.
\end{align*}
As described in Section \ref{sec:pol_eval_K} this allow us to construct a doubly robust estimator for the
value of the learned policy, see Algorithm \ref{alg:cross_fitted_learned_value}. The algorithm also provide a variance estimate enabling a
Wald type confidence interval. In the \pkg{polle} package, learned policy evaluation is implemented in the function \code{policy_eval(policy_learn)}.

\begin{algorithm}[]\label{alg:cross_fitted_learned_value}
    \SetAlgoLined
    \DontPrintSemicolon
    \SetKwFunction{split}{M-folds}
    \SetKwInOut{Input}{input}
    \SetKwInOut{Output}{output}

    \Input{Data set with iid observations \(\mathcal{O} = (O_{1},\ldots,O_{N})\)\\
    Policy learning procedure \(\hat{d}\)\\
    Action probability regression procedure $\hat{g}$\\
    Outcome regression procedure $\hat{Q}^d$}
    \Output{Cross-fitted value estimate \(\theta^{d}_{N}\)\\
    Cross-fitted variance estimate $\Sigma^d_N$}
    \vspace*{1em}

    \(\{\mathcal{O}_1,\dots,\mathcal{O}_M\}\) = \split(\(\mathcal{O}\))\;

    \ForEach{\(m\in\{1,\ldots,M\}\)}{
    $d_m = \hat{d}(\mathcal{O} \setminus  \mathcal{O}_m)$\;
    $g_m = \hat{g}(\mathcal{O} \setminus  \mathcal{O}_m)$\;
    $Q^{d_m}_m = \hat{Q}^{d_m}(\mathcal{O} \setminus  \mathcal{O}_m)$\;
    $\mathcal{Z}_{1,m} = \{Z_1(d_m, g_m, Q_m^{d_m})(O): O\in \mathcal{O}_m \}$ \;
    }
    $\theta^{d_N}_N = N^{-1} \sum_{m = 1}^M \sum_{Z \in \mathcal{Z}_{1,m} } Z$ \;
    $\Sigma^{d_N}_N = N^{-1} \sum_{m = 1}^M \sum_{Z \in \mathcal{Z}_{1,m} } (Z - \theta^{d_N}_N)^2$\:

\caption{Cross-fitted doubly robust estimator of $\theta^{d_N}_0$}
\end{algorithm}

\subsection{Stochastic number of stages}

The methodology developed for a fixed number of stages can be extended to handle a stochastic number of stages assuming that the maximal number of stages is finite. The key is to modify each observation such that every observation has the same number of stages.\\

Let $K^*$ denote the stochastic number of stages bounded by a maximal number of stages $K$. As in Section \ref{sec:setup}, let $B^*$ denote the baseline data, $(A_1^*, \ldots, A^*_{K^*})$ denote the decisions and $(S_1^*, \ldots, S^*_{K^*+1})$ denote the stage summaries where $S^*_k = (X_k^*, U^*_k)$ and $X^*_{K^*+1} = \emptyset$. The utility $U^*$ is still the sum of the rewards $U^* = \sum_{k=1}^{K^*+1} U^*_k$. We assume that the distribution of the observed data is given by $P^*_0$ composed of conditional densities $p_0^*(B^*)$, $p^*_{0,k}(A^*_k\mid H^*_k)$ for $k\in \{1,...,K\}$ and $p^*_{0,k}(S^*_k\mid H^*_k)$ for $k\in \{1,...,K+1\}$ such that the likelihood for an observation $O^*$ is given by
\begin{align*}
p^*_0(O^*) = p^*_0(B^*)\left[ \prod_{k = 1}^{K^*} p^*_{0, k}(A^*_{k}|H^*_{k}) \right] \left[ \prod_{k = 1}^{K^*+1} p^*_{0, k}(S^*_{k}| H^*_{k-1}, A^*_{k-1}) \right]. 
\end{align*}
For a feasible policy $d^* = (d^*_1, ..., d^*_K)$ the distribution $P^{^*d^*}$ is defined similar to $P^d$ in \eqref{eq:likelihood_d}. Also, the value under $P^{^*d^*}$ is defined as $\E\left[U^{*d} \right]$.\\

We now construct auxiliary data such that each observation $O^*$ has $K$ stages. Let $A_k = A^*_{k}$ for $k \leq K^*$ and $A_{k} = a^{\dagger} \in \mathcal{A}$ (for some default value $a^{\dagger}$) for $k > K^*$. Similarly, let $S_k = S^*_{k}$ for $k \leq K^* + 1$. Finally, let $X_{k} = \emptyset$ and $U_{k} = 0$ for $k > K^* + 1$ such that $U = U^*$. This construction implies a partly degenerate distribution $P_0$ over the maximal number of stages with density on the form given by \eqref{eq:likelihood_obs}, see \cite{goldberg2012q}. A feasible policy $d$ associated with $d^*$ is given by
\begin{align*}
d_k(H_k) =
\begin{cases}
a^{\dagger} &  \text{if}\, X_k = \emptyset \\
d^*_k(H_k) & \text{otherwise}.
\end{cases}
\end{align*}
Furthermore, it holds by construction that $g_{0,k}(H_k, a^{\dagger}) = 1$ and $Q^d_{0,k}(H_k, a^{\dagger}) = U$ if $X_k = \emptyset$.
Finally, a generalization of the results in \cite{goldberg2012q} yields that $\E\left[U^{d} \right] = \E\left[U^{*d^*} \right]$.
Thus, the methodology developed for a fixed number of stages can be used on the augmented data.

\subsection{Partial policy}

It may occur that a small subset of the observations has numerous stages. Without further structural assumptions, information about these late stages will be sparse. Uncertain estimation of the $Q$-functions for the late stages can be avoided by considering partial policies. Let $\tilde K < K$ and let $\overline{d}_{\tilde K}$ be a given policy up till stage $\tilde K$. A partial (stochastic) policy is now given by $(\overline{d}_{\tilde K}, A_{\tilde K + 1}, \ldots, A_{K})$. By setting $Q_{0,\tilde K} = \E[U\mid H_{\tilde K} = h_{\tilde K}, A_{\tilde K} = a_{\tilde K}]$ and $Q_{0,\tilde K+1} = U$, the efficient influence score for the partial policy value will be equal to \eqref{eq:Z_score_d_k} with $K$ replaced by $\tilde K$. From a practical point of view, implementation of a partial policy requires that $(g_{0,\tilde K + 1}, \ldots, g_{0,K})$ is known (or at least well approximated). It is easy to consider partial policies in the \pkg{polle} package by using \code{partial()} on a given policy data object.

\subsection{Realistic policy learning}

Positivity violations or even near positivity violations is a large concern for policy learning based on
historical data. Estimation of a valid loss function will solely rely on extrapolation of the $Q$-functions
to decisions with little or no support in the observed data, see \cite{petersen2012diagnosing}.
To address this issue we suggest restricting the set of possible interventions based on the action
probability model. For a probability threshold $\alpha>0$, define the set of realistic actions at stage $k$
based on the action probability model $g$ as
\begin{align*}
D^{\alpha}_{g,k}(h_k) = \{a_k \in  \mathcal{A}: g_{k}(h_k, a_k) > \alpha\}.
\end{align*}
It is relatively simple to modify doubly robust $Q$-learning to only consider realistic policies, see Algorithm \ref{alg:dr_RQV_learning}.
On the other hand, it is harder to make the same practical modification to a
given value search algorithm because the structure of the candidate function class
$\mathcal{D}^V$ changes in a non-trivial way.
However, in the situation that
the action set is dichotomous, the recommended action can be overruled by the alternative action,
if it is deemed unrealistic. In the \pkg{polle} package realistic policy learning is implemented via \code{policy_learn(alpha)}.

\begin{algorithm}[] \label{alg:dr_RQV_learning}
    \SetAlgoLined
    \DontPrintSemicolon
    \SetKwFunction{split}{L-folds}
    \SetKwInOut{Input}{input}
    \SetKwInOut{Output}{output}
    \newcommand{\forcond}{$k=K$ \KwTo $1$}

    \Input{Data set with iid observations \(\mathcal{O} = (O_{1},\ldots,O_{N})\)\\
          Action probability regression procedure $\hat g$\\
          Outcome regression procedure $\hat Q = \{\hat Q_1, ..., \hat Q_K\}$\\
          Outcome regression procedure $\hat{\QV} = \{\hat{\QV}_1, ..., \hat{\QV}_K\}$}
    \Output{Realistic $V$-restricted optimal policy estimate $d^V_N$}
    \vspace*{1em}
    $g_{N} = \hat{g}(\mathcal{O})$\;
    \(\{\mathcal{O}_1,\dots,\mathcal{O}_L\}\) = \split(\(\mathcal{O}\))\;
     \ForEach{\(l\in\{1,\ldots,L\}\)}{
        $g_l = \hat{g}(\mathcal{O} \setminus  \mathcal{O}_l)$\;
     }
    \For{\forcond}{
        \ForEach{\(l\in\{1,\ldots,L\}\)}{
          $Q^{\underline{d}^V_{N,k+1, l}}_{k, l} = \hat{Q}_k\left(\left\{H_k, A_k, Q^{\underline{d}^V_{N, k+2, l}}_{k+1,l}\left(H_{k+1}, d^V_{N,k+1,l}(H_{k+1})\right): O \in \mathcal{O} \setminus  \mathcal{O}_l \right\}\right)$\;
          \ForEach{$a_k \in \mathcal{A}$}{
          $\widetilde{\mathcal{O}}_{k,l}(a_k) = \left\{\overline{A}_{k-1}, V_k, Z_k\left([a_k, \underline{d}^V_{N,k+1,l}], g_l, Q_l^{\underline{d}^V_{N,k+1,l}}\right)(O): O\in \mathcal{O}_l \right\}$ \;}
    }
    \ForEach{$a_k \in \mathcal{A}$}{
    $\QV^{\underline{d}^V_{N, k+1}}_{N, k}(a_k) = \hat{\QV}_k\left(\{\widetilde{O} \in \widetilde{\mathcal{O}}_{k,l}(a_k): l = 1,...,L\}\right)$ \;
    }
    \ForEach{\(l\in\{1,\ldots,L\}\)}{
      $d^V_{N,k, l} = \underset{\quad a_k \in D^{\alpha}_{g_{l},k}}{\argmax} \QV^{\underline{d}^V_{N, k+1}}_{N, k}(a_k)$\;
    }
    $d^V_{N,k} = \underset{\quad a_k \in D^{\alpha}_{g_{N},k}}{\argmax} \QV^{\underline{d}^V_{N, k+1}}_{N, k}(a_k)$\;
    }
\caption{Realistic Doubly Robust $Q$-learning}
\end{algorithm}


\FloatBarrier
\section{Syntax and implementation details}  \label{sec:syntax}

The \pkg{polle} implementation is build up around four functions:
\code{policy\_data()}, \code{policy\_def()}, \code{policy\_eval()} and
\code{policy\_learn()}. Figure \ref{fig:flow} provides an overview of how the
functions relate and the main required inputs and outputs.
\begin{figure}[!htbp]
  \centering
  \includegraphics[width=1\textwidth]{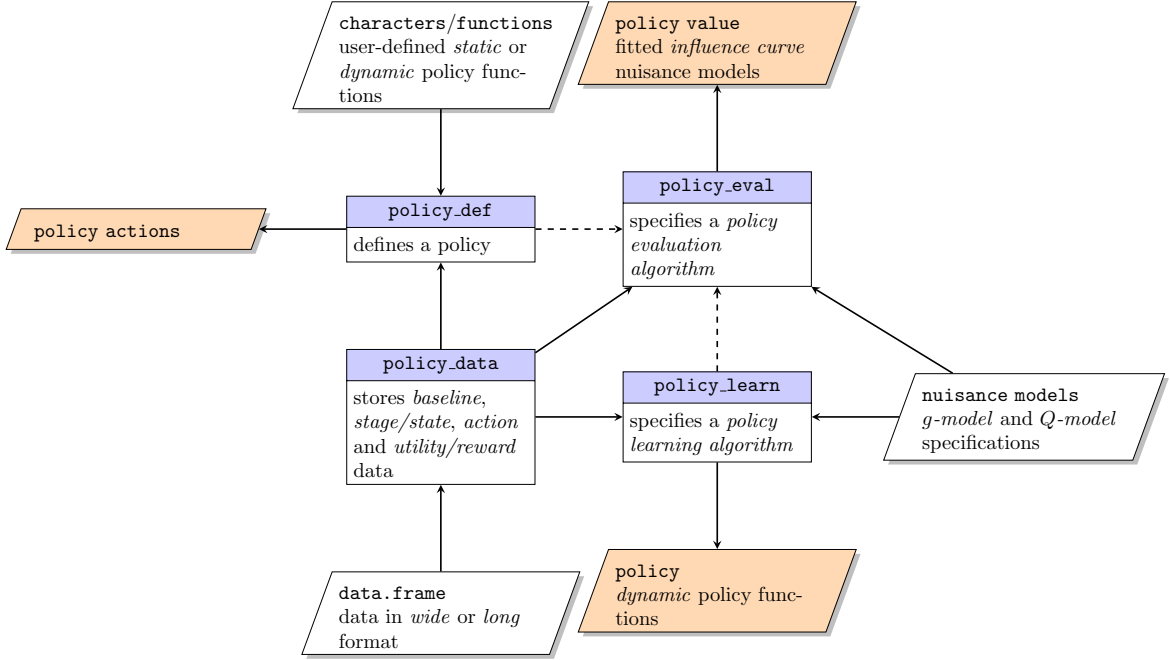}
  \caption{Overview of the four main functions of \pkg{polle} and their arguments
    and return values. The starting
    point is to define the input data in the correct format using
    \code{policy\_data()}. A policy can subsequently be defined directly by the
    user, \code{policy\_def}(), or estimated with one of the algorithms descripted
    in Section \ref{sec:policylearn} with \code{policy\_learn()}. The value of a
    policy can be estimated directly using \code{policy\_eval()}.
    \label{fig:flow}}
\end{figure}
\code{policy_data()} constructs a policy data object. The data input can be on
long or wide format. Usually, the wide format is used for
applications with a fixed number of stages and a possibly varying set of state
covariates. Assume that the observed data has the sequential form
\begin{align*}
O = (B, X_1, U_1, A_{1}, X_{2}, W_{2}, U_2, A_{2}, U_3),
\end{align*}
where $B$ is a baseline covariate, $X_1$, $X_2$, and $W_2$ are state covariates,
$U_1$, $U_2$, and $U_3$ are rewards, and $A_1$ and $A_2$ are actions.
Given a \code{data.table}/\code{data.frame} denoted \code{data} with
variable/column names \code{B}, \code{X_1}, \code{U_1}, \code{A_1}, \code{X_2},
\code{W_2}, \code{U_2}, \code{A_2} and \code{U_3}, we can apply
\code{policy_data} in the following way:
\begin{CodeChunk}
\begin{CodeOutput}
  policy_data(data,
              action = c("A_1", "A_2"),
              baseline = c("B"),
              covariates = list(X = c("X_1", "X_2"),
                                W = c(NA, "W_2")),
              utility = c("U_1", "U_2", "U_3"),
              type = "wide")
\end{CodeOutput}
\end{CodeChunk}
If only the final utility \code{U} is provided, we may replace
\code{c("U_1", "U_2", "U_3")} with \code{c("U")}. Note that each row in \code{data}
corresponds to a single observation.

The long format is inspired by the data format used for survival
data \citep{survival-package}. This format is relevant for handling a high
and possibly stochastic number of stages. Assume that the observed data has
the sequential form
\begin{align*}
O = (B, X_1, U_1, A_{1}, \ldots , X_{K^*}, U_{K^*}, A_{K^*}, U_{(K^*+1)}),
\end{align*}
where $K^*$ is the (stochastic) number of stages. Assume that \code{stage_data}
is a \code{data.table} with variable names \code{id}, \code{stage},
\code{event}, \code{X}, \code{U} and \code{A}. \code{id} and \code{stage}
denotes the observation ID and stage number. The variable \code{event} is an
event indicator which is \code{0} in stage $1$ through $K^*$ and \code{1}
in stage $(K^*+1)$. Also assume that \code{baseline_data} is a \code{data.table}
with variable names \code{id} and \code{B}. An application of \code{policy_data()}
is now given by:
\begin{CodeChunk}
\begin{CodeOutput}
  policy_data(stage_data,
              baseline_data = baseline_data,
              action = "A",
              baseline = c("B"),
              covariates = c("X"),
              utility = "U",
              id = "id",
              stage = "stage",
              event = "event",
              type = "long")
\end{CodeOutput}
\end{CodeChunk}
Note, an observation with $K^*$ stages spans over $(K^*+1)$ rows in
\code{stage_data} and a single row in \code{baseline_data}.

The function \code{policy_def()} constructs a user-specified static or
dynamic policy. The resulting policy object can be applied directly on a policy
data object or as input to \code{policy_eval()}.
\begin{CodeChunk}               %
\begin{CodeOutput}
  policy_def(policy_functions,
             full_history = FALSE,
             replicate = FALSE)
\end{CodeOutput}
\end{CodeChunk}
\code{policy_functions} may be a single function/character string or a
list of functions/character strings defining the policy at each stage.
The argument \code{full_history} defines the input to the policy functions.
If \code{full_history = FALSE} the state/Markov type history $(B, X_k)$ is
passed on to the functions with variable names \code{B} and \code{X}. If
\code{full_history = TRUE},
the full history $(B, X_1, A_1,\ldots, X_{k-1}, A_{k-1}, X_k)$ with variable
names \code{B, X_1, A_1,..., X_(k-1), A_(k-1), X_k} is passed on to the functions.
As an example, \code{function(X) 1*(X>0)} in combination with
\code{full_history = FALSE} defines
the policy $A_k = I\{X_k > 0\}$. Similarly, at stage $k=2$,
\code{function(X_1, X_2) 1*(X_1>0)*(X_2>0)} in combination with
\code{full_history = TRUE} defines the policy $A_2 = I\{X_1>0, X_2>0\}$.
The input \code{replicate = TRUE} will reuse the provided policy functions at
each stage if possible.\\

\code{policy_learn()} specifies a policy learning algorithm which can be used
directly on a policy data object or as input to \code{policy_eval()}.
The \code{type} argument selects the method. Table \ref{tab:policy_learn_type}
provides an overview of the method types, dependencies and limitations.
\begin{table}[htbp]
\centering
\begin{tabular}{ |c|p{4cm}|c|p{4cm}| }
 \hline
 \code{type} \textbf{argument}  & \textbf{Method} & \textbf{Imports}            & \textbf{Limitations} \\
 \hline
 \code{"ql"}  & $Q$-learning &                    & \\
 \hline
 \code{"drql"} & Doubly Robust $Q$-learning. Algorithm \ref{alg:dr_QV_learning}, \ref{alg:dr_RQV_learning}. &                    & \\
 \hline
 \code{"ptl"}  & Policy tree learning. Algorithm \ref{alg:sequential_value_search}. & \pkg{policytree}   & {Realistic policy learning implemented for dichotomous action sets.}\\
 \hline
 \code{"owl"}  & Outcome weighted learning & \pkg{DTRlearn2}    & {No realistic policy learning. Fixed number of stages. Dichotomous action set. Augmentation terms are not cross-fitted.} \\
 \hline
 \code{"earl"} & Efficient augmented and relaxation learning & \pkg{DynTxRegime}  & {Single stage. No cross-fitting. No realistic policy learning. Dichotomous action set.} \\
 \hline
 \code{"rwl"}  & Residual weighted learning & \pkg{DynTxRegime}  & {Same as  \code{"earl"}.} \\
 \hline
\end{tabular}
\caption{Overview of policy learning methods and their dependencies and limitations.
\label{tab:policy_learn_type}}
\end{table}

A cross-fitted doubly robust $V$-restricted $Q$-learning algorithm, see Algorithm
\ref{alg:dr_QV_learning} and \ref{alg:dr_RQV_learning}, may be specified as follows:

\begin{CodeChunk}
\begin{CodeOutput}
  policy_learn(type = "drql",
               control = list(qv_models = q_glm(~X)),
               full_history = FALSE,
               alpha = 0.05,
               L = 10)
\end{CodeOutput}
\end{CodeChunk}

The control argument \code{qv_models} is a single model or a list of models specifying the \QV-models.
We will subsequently describe these models in detail. Note that a \QV-model is fitted
for each action in the action set. The argument \code{full_history} specifies the history
available to the \QV-models similar to \code{full_history} in \code{policy_def()}.
The argument \code{alpha} is the probability threshold for defining
the set of realistic actions. The default value is \code{alpha = 0}. Finally, the argument
\code{L} is the number of folds used in the cross-fitting procedure.

Similarly, a cross-fitted doubly robust sequential value search procedure based on
decision trees, see Algorithm \ref{alg:sequential_value_search}, may be specified
as follows:

\begin{CodeChunk}
\begin{CodeOutput}
  policy_learn(type = "ptl",
               control = control_ptl(policy_vars = c("X"),
                                     depth,
                                     split.step,
                                     min.node.size,
                                     hybrid,
                                     search.depth)
               full_history = FALSE,
               alpha = 0.05,
               L = 10)
\end{CodeOutput}
\end{CodeChunk}

The function \code{control_ptl()} helps set the default control
arguments for \code{type = "ptl"}. Similar functions are available for every
policy learning type. The control argument \code{policy_vars} is a character vector or a list
of character vectors further subsetting the history available to the decision
tree model. The control arguments \code{depth}, \code{split.step}, \code{min.node.size},
and \code{search.depth} are
directly passed on to \code{policytree::policy_tree()}. Each of these arguments must be an
integer or an integer vector. The control argument \code{hybrid} is a logical value
indicating whether to use \code{policytree::policy_tree()} or
\code{policytree::hybrid_policy_tree()}.

The value of a user-specified policy or a policy learning algorithm can be
estimated using \code{policy_eval()}. The evaluation can be based on inverse
probability weighting or outcome regression. However, the default is to use the doubly
robust value estimator given by Algorithm \ref{alg:cross_fitted_learned_value}:

\begin{CodeChunk}
\begin{CodeOutput}
  policy_eval(type = "dr",
              policy_data,
              policy,
              policy_learn,
              g_models = g_glm(~ X+B),
              g_full_history = FALSE,
              q_models = q_glm(~ A*X),
              q_full_history = FALSE,
              M = 10)
\end{CodeOutput}
\end{CodeChunk}

\code{g_models} and \code{g_full_history} specifies the modelling of the
$g$-functions. If \code{g_full_history = FALSE} and a single $g$-model is
provided, a single Markov type model across all stages is fitted. In this case
a generalized linear model is fitted with a model matrix given the formula \code{~ X+B}.
If \code{g_full_history = TRUE} or \code{g_models} is a list, a $g$-function is
fitted for each stage.
Similarly, \code{q_models} and \code{q_full_history} specifies the modelling of
the $Q$-functions. A model is fitted at each stage. If
\code{q_full_history = FALSE} and a single $Q$-model is provided, the model is
reused at each stage with the same design. Alternatives to \code{g_glm()} and
\code{q_glm()} are listed in Table \ref{g_q_models}.
The models are created to save the design
specifications, which is useful for cross-fitting.
\code{M} is the number of folds in the cross-fitting procedure.

\begin{table}[htbp]
\centering
\begin{tabular}{ |c|p{4cm}|c|p{4cm}| }
 \hline
 \textbf{Call}  & \textbf{Method} & \textbf{Imports} & \textbf{Limitations} \\
 \hline
 \code{g_empir} & empirical (conditional) probabilities &  & \\
 \hline
 \code{g_glm/q_glm} & generalized linear model & \pkg{stats} & {\code{g_glm}: dichotomous actions}\\
 \hline
 \code{g_glmnet/q_glmnet}  & lasso and elastic-net regularized generalized linear models & \pkg{glmnet}   & {\code{g_glmnet}: dichotomous actions}\\
 \hline
 \code{g_rf/q_rf}  & random forests & \pkg{ranger} & {} \\
 \hline
 \code{g_sl/q_sl} & super learner prediction algorithm & \pkg{SuperLearner}  & {\code{g_sl}: dichotomous actions} \\
 \hline
\end{tabular}
\caption{Overview of available $g$-model and $Q$-model constructors.
\label{g_q_models}}
\end{table}


\FloatBarrier

\section{Examples}\label{sec:examples}

In this section we go through two reproducible examples based on
simulated data sets that illustrates various applications of the \pkg{polle}
package.

In Section \ref{sec:ex:singlestage} we consider a single-stage problem.
We demonstrate how the policy evaluation framework handles static policies to
obtain estimates of causal effects. Furthermore, we evaluate the true optimal
dynamic policy using highly adaptive nuisance models and use doubly robust
blip learning to obtain an estimate of the same optimal policy.
In Section \ref{sec:ex:twostage} we study a problem with two fixed stages. We
show how to create a policy data object from raw data on wide format and how to
formulate the optimal dynamic policy over multiple stages. We use $g$-models and
$Q$-models with custom data designs to evaluate the policy and showcase the use
of policy trees.

Additional examples including handling a stochastic number of stages and multiple actions
are available in the package vignettes for \code{policy_data()}, \code{policy_eval()}, and \code{policy_learn()}.



\subsection{Single-stage problem} \label{sec:ex:singlestage}

To illustrate the usage of the \pkg{polle} package we first consider a
single-stage problem. Here we consider data from a simulation where the optimal policy
is known. We consider observed data from the directed acyclic graph (DAG) given in Figure \ref{fig:dag1}.
\begin{figure}[htbp]
  \centering
  \includegraphics[width=0.35\textwidth]{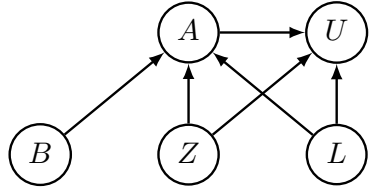}
  \caption[]{\label{fig:dag1}
    Single-stage problem with treatment variable \(A\), utility \(U\), and
    confounders \(B, Z, L\).
  }
\end{figure}

The utility/reward/response is in this example defined as the conditional Gaussian
distribution
\begin{align*}
  U\mid Z,L,A \sim \mathcal{N}(Z+L+A\cdot\{\gamma Z + \alpha L + \beta\}, \sigma^{2})
\end{align*}
with independent state covariates/variables \(Z,L\sim\operatorname{Uniform}([0,1])\), and treatment, \(A\),
defined by the logistic regression model
\begin{align*}
A\mid Z,L,B \sim \operatorname{Bernoulli}(\operatorname{expit}\{\kappa Z^{-2}(Z+L-1) + \delta B\})
\end{align*}
where \(B\sim\operatorname{Bernoulli}(\pi)\) is an additional independent state covariate, and
\(\operatorname{expit}\) is the inverse logistic link function. Here we consider
the choices \(\pi = 0.3, \kappa=0.1, \Delta=0.5, \alpha=1, \beta=-2.5, \gamma=3, \sigma=1\):
\begin{Schunk}
\begin{Sinput}
R> library("polle")
R> par0 <- c(p = .3, k = .1,  d = .5, a = 1, b = -2.5, c = 3)
R> d <- sim_single_stage(n = 5e2, seed=1, par=par0)
R> head(d, 4)
\end{Sinput}
\begin{Soutput}
           Z          L B A          U
1  1.2879704 -1.4795962 0 1 -0.9337648
2  1.6184181  1.2966436 0 1  6.7506026
3  1.2710352 -1.0431352 0 1 -0.3377580
4 -0.2157605  0.1198224 1 0  1.4993427
\end{Soutput}
\end{Schunk}

The data is first transformed using \code{policy_data()} with
instructions on which variables defines the \emph{action}, state \emph{covariates} and
the \emph{utility}:
\begin{Schunk}
\begin{Sinput}
R> pd <- policy_data(d, action="A", covariates=list("Z", "B", "L"), utility="U")
R> pd
\end{Sinput}
\begin{Soutput}
Policy data with n = 500 observations and maximal K = 1 stages.

     action
stage   0   1   n
    1 278 222 500

Baseline covariates:
State covariates: Z, B, L
Average utility: -0.98
\end{Soutput}
\end{Schunk}

\subsubsection{Policy Evaluation}

A single-stage \emph{policy} is mapping from the history $H = (B, Z, L)$ onto
the set of actions $\mathcal{A} = \{0,1\}$.
It is possible to evaluate both user-defined policies as well as learning a
policy from the data using \pkg{polle}. Here we first illustrate how to estimate
the value of a \emph{static policy} where all individuals are given action `1`
irrespective of their covariate values. Policies are defined using
\code{policy_def()} which expects a function as
input or, as here, a numeric vector specifying the static policy:

\begin{Schunk}
\begin{Sinput}
R> p1 <- policy_def(1, name="A=1")
R> p1
\end{Sinput}
\begin{Soutput}
Policy with argument(s)
policy_data
\end{Soutput}
\end{Schunk}

The policy can be applied to a \code{policy_data} object to get the individual actions:
\begin{Schunk}
\begin{Sinput}
R> p1(pd) |> head(3)
\end{Sinput}
\begin{Soutput}
Key: <id, stage>
      id stage      d
   <int> <int> <char>
1:     1     1      1
2:     2     1      1
3:     3     1      1
\end{Soutput}
\end{Schunk}

Note that a \code{policy} applied to a \code{policy_data} object returns a \code{data.table}
with keys \code{id} and \code{stage}. In this case \code{policy_data()} has set the default \code{id} values.

The value of the policy can then be estimated using \code{policy_eval()}:
\begin{Schunk}
\begin{Sinput}
R> (pe1 <- policy_eval(pd, policy=p1))
\end{Sinput}
\begin{Soutput}
    Estimate Std.Err   2.5
A=1   -2.674  0.2116 -3.089 -2.26 1.331e-36
\end{Soutput}
\end{Schunk}

This provides an estimate of the average potential outcome,
\(\E[U^{(a=1)}]\). By default, a doubly robust estimator given by
Algorithm \ref{alg:cross_fitted_value} without cross-fitting is used
to estimate the value. A logistic regression model with all main effects is
used to model the \(g\)-function and a linear regression model
with all interaction effects between the action and each of the state
covariates is used to model the \(Q\)-function. We will later
revisit how to estimate the value of the policy using flexible machine
learning models and cross-fitting.

In the same way, we can estimate the value under the action `0`:
\begin{Schunk}
\begin{Sinput}
R> (pe0 <- policy_eval(pd, policy=policy_def(0, name = "A=0")))
\end{Sinput}
\begin{Soutput}
    Estimate Std.Err    2.5
A=0 -0.02243 0.08326 -0.1856 0.1408  0.7877
\end{Soutput}
\end{Schunk}

Finally, the average treatment effect,
\(ATE := \E\{U^{(a=1)}) -  U^{(a=0)}\}\), can then be estimated as:
\begin{Schunk}
\begin{Sinput}
R> estimate(merge(pe0, pe1), function(x) x[2]-x[1], labels="ATE")
\end{Sinput}
\begin{Soutput}
    Estimate Std.Err   2.5
ATE   -2.652  0.1737 -2.992 -2.312 1.236e-52
\end{Soutput}
\end{Schunk}

The function \code{lava::merge.estimate()} combines the fitted influence curve for each estimate \footnote{https://cran.r-project.org/web/packages/lava/vignettes/influencefunction.html}.
The influence curve matrix is available via \code{IC()}:

\begin{Schunk}
\begin{Sinput}
R> IC(merge(pe0, pe1)) |> head(3)
\end{Sinput}
\begin{Soutput}
          A=0       A=1
1 -0.08832404 0.6568576
2  2.61912976 9.6387445
3  0.27539152 1.0710632
\end{Soutput}
\end{Schunk}

The standard errors for the transformation \(f(x_1, x_2) = x_2 - x_1\) is then
given by the delta method.

In this case, we know that the optimal decision boundary is defined by the
hyper-plane \(\gamma Z + \alpha L + \beta = 0\). Again, we use \code{policy_def()}
to define the optimal policy:

\begin{Schunk}
\begin{Sinput}
R> p_opt <- policy_def(
+    function(Z, L) 1*((par0["c"]*Z + par0["a"]*L + par0["b"])>0),
+    name="optimal")
\end{Sinput}
\end{Schunk}

We estimate the value of the optimal policy using Algorithm
\ref{alg:cross_fitted_value}. Specifically, we use \code{M} fold cross-fitting and
super learners for the \(g\)-function and \(Q\)-function including random forests
regression and generalized additive models as implemented in the \pkg{SuperLearner}
package \citep{superlearner-package}:

\begin{Schunk}
\begin{Sinput}
R> set.seed(1)
R> policy_eval(
+    pd,
+    policy = p_opt,
+    g_models = g_sl(SL.library = c("SL.glm", "SL.ranger", "SL.gam")),
+    q_models = q_sl(SL.library = c("SL.glm", "SL.ranger", "SL.gam")),
+    M = 5
+  )
\end{Sinput}
\begin{Soutput}
        Estimate Std.Err   2.5
optimal    0.373  0.1109 0.1556 0.5903 0.0007693
\end{Soutput}
\end{Schunk}

\subsubsection{Policy learning}

In most applications the optimal policy is of course not known. Instead we seek to
estimate/learn the optimal policy from the data. The function
\code{policy_learn()} constructs a policy learner. Here we specify a
cross-fitted doubly robust blip learning algorithm almost identical to
Algorithm \ref{alg:dr_QV_learning}:

\begin{Schunk}
\begin{Sinput}
R> pl <- policy_learn(
+    type = "blip",
+    L = 5,
+    control = control_blip(blip_models = q_glm(formula = ~ Z + L))
+  )
\end{Sinput}
\end{Schunk}

The policy learner is restricted to $V=(Z,L)$ given by the \code{formula} argument.
Remember that \code{L} is the number of cross-fitting folds. The algorithm can
be applied directly resulting in a policy object:

\begin{Schunk}
\begin{Sinput}
R> set.seed(1)
R> po <- pl(
+    pd,
+    g_models = g_sl(SL.library = c("SL.glm", "SL.ranger", "SL.gam")),
+    q_models = q_sl(SL.library = c("SL.glm", "SL.ranger", "SL.gam"))
+  )
R> po
\end{Sinput}
\begin{Soutput}
Policy object with list elements:
blip_functions, q_functions_cf, g_functions_cf, action_set,
stage_action_sets, alpha, K, folds
Use 'get_policy' to get the associated policy.
\end{Soutput}
\end{Schunk}

The actions of the learned policy are available through \code{get_policy()} returning the associated \code{policy}
function, which we can apply to a \code{policy_data} object:

\begin{Schunk}
\begin{Sinput}
R> get_policy(po)(pd) |> head(3)
\end{Sinput}
\begin{Soutput}
Key: <id, stage>
      id stage      d
   <int> <int> <char>
1:     1     1      1
2:     2     1      1
3:     3     1      1
\end{Soutput}
\end{Schunk}

Alternatively, we can use \code{get_policy_functions()} to return the policy as a function of a \code{data.table}:

\begin{Schunk}
\begin{Sinput}
R> get_policy_functions(po, stage = 1)(H = data.table(Z = c(1,-2), L = c(0.5, -1)))
\end{Sinput}
\begin{Soutput}
[1] "1" "0"
\end{Soutput}
\end{Schunk}

The value of the learned policy can also be estimated directly via
\code{policy_eval()}, see Algorithm \ref{alg:cross_fitted_learned_value}:

\begin{Schunk}
\begin{Sinput}
R> set.seed(1)
R> pe <- policy_eval(
+    pd,
+    policy_learn = pl,
+    g_models = g_sl(SL.library = c("SL.glm", "SL.ranger", "SL.gam")),
+    q_models = q_sl(SL.library = c("SL.glm", "SL.ranger", "SL.gam")),
+    M = 5
+  )
R> pe
\end{Sinput}
\begin{Soutput}
     Estimate Std.Err   2.5
blip     0.39  0.1109 0.1727 0.6073 0.0004359
\end{Soutput}
\end{Schunk}

Note that the cross-fitting procedure is nested in this case, i.e.,
\(M \times L \) \(g\)-functions and \(Q\)-functions are fitted. The resulting
policy actions are displayed in Figure \ref{fig:ex1_dplot} along with the true
optimal decision boundary.

\begin{figure}[!ht]
\centering
\begin{Schunk}

\includegraphics[width=.75\textwidth]{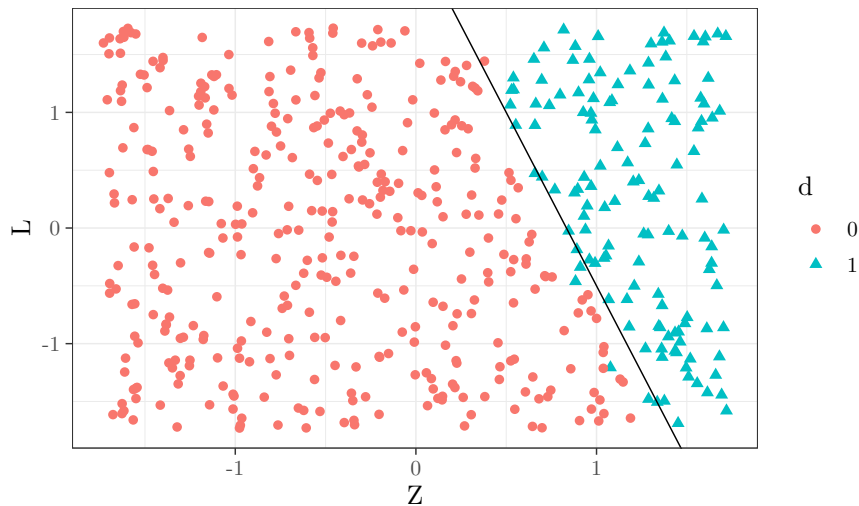} \end{Schunk}
\caption{Fitted policy actions based on doubly robust blip
learning. The black line shows the true optimal decision boundary.}
\label{fig:ex1_dplot}
\end{figure}

\subsection{Two-stage problem} \label{sec:ex:twostage}

In this example we consider a two-stage problem. An observation can be written as $O := (S_1, A_1, S_2, A_2, S_3)$, where $S_1 = (C_1, L_1, U_1)$, $S_2 = (C_2, L_2, U_2)$, and $S_3 = (L_3, U_3)$. The state covariates (cost $C_k$ and load $L_k$) and action variables ($A_k$) are associated with the DAG in Figure \ref{fig:dag2}.

\begin{figure}[htbp]
  \centering
  \includegraphics[width=0.7\textwidth]{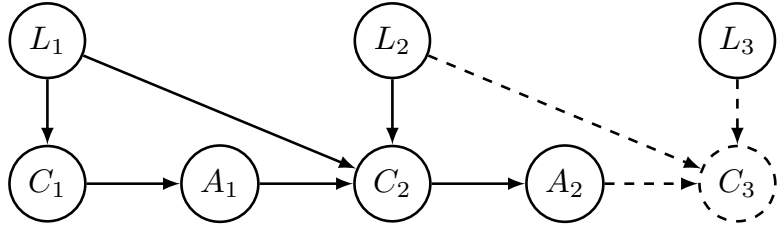}
  \caption[]{
    Two-stage problem with states \((L_{1},C_{1}), (L_{2},C_{2}), L_{3}\) with
    exogenous component  \(L_{k}, k=1,2,3\). As the utility in this example does not depend on the last
    endogenous state, \(C_{3}\), it can be omitted from the analysis. \label{fig:dag2}}
\end{figure}

Specifically, the costs, loads and actions are given by the structural model
\begin{align*}
L_{1} \sim& \mathcal{N}(0, 1)\\
C_{1} \mid L_{1} \sim& \mathcal{N}(L_1, 1)\\
A_1 \mid C_1 \sim& \text{Bernoulli}(\text{expit}(\beta C_1))\\
L_{2} \sim& \mathcal{N} (0, 1)\\
C_{2} \mid A_1, L_1 \sim& \mathcal{N}(\gamma L_1 + A_1, 1)\\
A_2 \mid C_2 \sim& \text{Bernoulli}(\text{expit}(\beta C_2))\\
L_{3} \sim& \mathcal{N} (0, 1)
\end{align*}
for parameters $\gamma, \beta \in \mathbb{R}$. The rewards are given by
\begin{align*}
U_1 &= L_1\\
U_2 &= A_1\cdot C_1 + L_2 \\
U_3 &= A_2\cdot C_2 + L_3
\end{align*}
Remember that the utility is the sum of the rewards, i.e., $U = U_1 + U_2 + U_3$.
In this problem we consider the parameter choices \(\gamma = 0.5, \beta = 1\):

\begin{Schunk}
\begin{Sinput}
R> par0 <- c(gamma = 0.5, beta = 1)
R> d <- sim_two_stage(2e3, seed=1, par=par0)
R> head(d[, -(1:2)], 3)
\end{Sinput}
\begin{Soutput}
          L_1       C_1   A_1        L_2       C_2   A_2         L_3
        <num>     <num> <int>      <num>     <num> <int>       <num>
1:  0.9696772  1.711279     1 -0.7393434  2.424370     1 -0.83124340
2: -2.1994065 -2.643124     0  0.4828756 -2.664728     0 -0.07151015
3:  1.9480938  2.061934     0  0.4803055  2.474761     1  0.40785209
          U_1       U_2         U_3
        <num>     <num>       <num>
1:  0.9696772 0.9719356  1.59312684
2: -2.1994065 0.4828756 -0.07151015
3:  1.9480938 0.4803055  2.88261357
\end{Soutput}
\end{Schunk}

The data is transformed using \code{policy_data()} with instructions on which
variables define the \emph{actions}, \emph{covariates} and the \emph{rewards}
at each stage.
\begin{Schunk}
\begin{Sinput}
R> pd <- policy_data(d,
+                    action = c("A_1", "A_2"),
+                    covariates = list(L = c("L_1", "L_2"),
+                                      C = c("C_1", "C_2")),
+                    utility = c("U_1", "U_2", "U_3"))
R> pd
\end{Sinput}
\begin{Soutput}
Policy data with n = 2000 observations and maximal K = 2 stages.

     action
stage    0    1    n
    1 1017  983 2000
    2  819 1181 2000

Baseline covariates:
State covariates: L, C
Average utility: 0.84
\end{Soutput}
\end{Schunk}

\subsubsection{Policy Evaluation}

The optimal policy $d_0 = (d_{0,1}, d_{0,2})$ is identified via the $Q$-functions. At stage $2$, the $Q$-function is given by
\begin{align*}
Q_{0,2}(h_2,a_2) =&E[U\mid H_2 = h_2, A_2 = a_2]\\
=& l_1 + a_1 c_1 + l_2 + a_2 c_2.
\end{align*}
Thus, the optimal policy at stage 2 is
\begin{align*}
d_{0,2}(h_2) =& \text{arg}\max_{a_2 \in \{0,1\}} Q_2(h_2,a_2)\\
=& I\{c_2 >0\}.
\end{align*}
At stage $1$, the $Q$-function under the optimal policy at stage 2 is given by
\begin{align*}
Q^{d_0}_{0,1}(h_1,a_1) =&E[Q_{0,2}(H_2, d_{0,2}(h_2))\mid H_1 = h_1, A_a = a_1]\\
=& l_1 + a_1 c_1 + E[I\{C_2 >0\}C_2 \mid L_1 = l_1, A_1 = a_1].
\end{align*}
Let
\begin{align*}
\kappa(a_1, l_1) =&  E[I\{C_2 >0\}C_2 \mid L_1 = l_1, A_1 = a_1]\\
=& E[I\{C_2 >0\} \mid L_1 = l_1, A_1 = a_1] E[C_2 \mid L_1 = l_1, A_1 = a_1, C_2 > 0]\\
=& \left(1 - \Phi(-\{\gamma l_1 + a_1\})\right) \left(\gamma l_1 + a_1 + \frac{\phi(-\{\gamma l_1 + a_1\})}{1 - \Phi(-\{\gamma l_1 + a_1\})}\right).
\end{align*}
The optimal policy at stage 1 can now be written as
\begin{align*}
d_{0,1}(h_1) =& \text{arg}\max_{a_1 \in \{0,1\}} Q_1(h_1,a_1)\\
=& I\{(c_1 + \kappa(1, l_1) - \kappa(0, l_1)) >0\}.
\end{align*}

The basis for defining the optimal policy are the histories $H_1 = (L_1, C_1)$
and $H_2 = (L_1, C_1, A_1, L_2, C_2)$ which are available via
\code{get_history()}:

\begin{Schunk}
\begin{Sinput}
R> get_history(pd, stage = 1, full_history = TRUE)$H |> head(3)
\end{Sinput}
\begin{Soutput}
Key: <id, stage>
      id stage        L_1       C_1
   <int> <num>      <num>     <num>
1:     1     1  0.9696772  1.711279
2:     2     1 -2.1994065 -2.643124
3:     3     1  1.9480938  2.061934
\end{Soutput}
\begin{Sinput}
R> get_history(pd, stage = 2, full_history = TRUE)$H |> head(3)
\end{Sinput}
\begin{Soutput}
Key: <id, stage>
      id stage    A_1        L_1        L_2       C_1       C_2
   <int> <num> <char>      <num>      <num>     <num>     <num>
1:     1     2      1  0.9696772 -0.7393434  1.711279  2.424370
2:     2     2      0 -2.1994065  0.4828756 -2.643124 -2.664728
3:     3     2      0  1.9480938  0.4803055  2.061934  2.474761
\end{Soutput}
\end{Schunk}

We use the \code{policy_def()} function to define the optimal policy:

\begin{Schunk}
\begin{Sinput}
R> kappa <- function(mu){
+    pnorm(q = -mu, lower.tail = FALSE) *
+      (mu + dnorm(-mu) / pnorm(-mu, lower.tail = FALSE))
+  }
R> p_opt <- policy_def(
+    list(function(C_1, L_1){
+      1*((C_1 +
+         kappa(par0[["gamma"]] * L_1 + 1) -
+         kappa(par0[["gamma"]] * L_1)
+      ) > 0)
+    },
+    function(C_2){
+      1*(C_2 > 0)
+    }),
+    full_history = TRUE,
+    name = "optimal"
+  )
R> p_opt
\end{Sinput}
\begin{Soutput}
Policy with argument(s)
policy_data
\end{Soutput}
\end{Schunk}

The optimal policy can be applied directly on the policy data:
\begin{Schunk}
\begin{Sinput}
R> p_opt(pd) |> head(3)
\end{Sinput}
\begin{Soutput}
Key: <id, stage>
      id stage      d
   <int> <int> <char>
1:     1     1      1
2:     1     2      1
3:     2     1      0
\end{Soutput}
\end{Schunk}

Doubly robust evaluation of the optimal policy requires modelling the
$g$-functions and $Q$-functions. In this case, the $g$-function is repeated
at each stage. Thus, we may combine $(C_1, A_1)$ and $(C_2, A_2)$ when fitting
the $g$-function. The combined state histories and actions are available
through the \code{get_history()} function with \code{full_history = FALSE}:

\begin{Schunk}
\begin{Sinput}
R> get_history(pd, full_history = FALSE)$H |> head(3)
\end{Sinput}
\begin{Soutput}
Key: <id, stage>
      id stage          L         C
   <int> <int>      <num>     <num>
1:     1     1  0.9696772  1.711279
2:     1     2 -0.7393434  2.424370
3:     2     1 -2.1994065 -2.643124
\end{Soutput}
\begin{Sinput}
R> get_history(pd, full_history = FALSE)$A |> head(3)
\end{Sinput}
\begin{Soutput}
Key: <id, stage>
      id stage      A
   <int> <int> <char>
1:     1     1      1
2:     1     2      1
3:     2     1      0
\end{Soutput}
\end{Schunk}

Similarly, when using \code{policy_eval()}, we can specify the
structure of the used histories:

\begin{Schunk}
\begin{Sinput}
R> pe_opt <- policy_eval(pd,
+                        policy = p_opt,
+                        g_models = g_glm(),
+                        g_full_history = FALSE,
+                        q_models = list(q_glm(), q_glm()),
+                        q_full_history = TRUE)
R> pe_opt
\end{Sinput}
\begin{Soutput}
        Estimate Std.Err  2.5
optimal    1.311 0.06578 1.182  1.44 2.067e-88
\end{Soutput}
\end{Schunk}

On closer inspection we see that a single $g$-model has been fitted across all stages:
\begin{Schunk}
\begin{Sinput}
R> get_g_functions(pe_opt)
\end{Sinput}
\begin{Soutput}
$all_stages
$model

Call:  NULL

Coefficients:
(Intercept)            L            C  
    0.01591      0.03145      0.98013  

Degrees of Freedom: 3999 Total (i.e. Null);  3997 Residual
Null Deviance:	    5518 
Residual Deviance: 4361 	AIC: 4367

attr(,"full_history")
[1] FALSE
\end{Soutput}
\end{Schunk}

If \code{q_models} is not a list, the provided model is reused at each stage.
In this case the full history is used at both stages:

\begin{Schunk}
\begin{Sinput}
R> get_q_functions(pe_opt)
\end{Sinput}
\begin{Soutput}
$stage_1
$model

Call:  NULL

Coefficients:
(Intercept)           A1          L_1          C_1       A1:L_1  
    0.52415      0.44348      0.17462      0.09043      0.21854  
     A1:C_1  
    0.92899  

Degrees of Freedom: 1999 Total (i.e. Null);  1994 Residual
Null Deviance:	    6029 
Residual Deviance: 2772 	AIC: 6343

$stage_2
$model

Call:  NULL

Coefficients:
(Intercept)           A1         A_11          L_1          L_2  
  -0.014697     0.140420    -0.148007    -0.118571     0.002233  
        C_1          C_2      A1:A_11       A1:L_1       A1:L_2  
   0.124817    -0.031178     0.046876     0.171946     0.023246  
     A1:C_1       A1:C_2  
  -0.113314     0.944778  

Degrees of Freedom: 1999 Total (i.e. Null);  1988 Residual
Null Deviance:	    3580 
Residual Deviance: 1881 	AIC: 5579

attr(,"full_history")
[1] TRUE
\end{Soutput}
\end{Schunk}

Note that in practice only the residual value is used as input to the (residual) $Q$-models, i.e.,
\begin{align*}
Q_{0,2, \text{res}}(h_2,a_2) :=&E[U_3\mid H_2 = h_2, A_2 = a_2] \\
=& a_2 c_2\\
Q^{d_0}_{0,1, \text{res}}(h_1,a_1) :=& E[U_2 + Q_{0,2, \text{res}}(H_2, d_{0,2}(h_2))\mid H_1 = h_1, A_a = a_1]\\
=& a_1 c_1 + E[I\{C_2 >0\}C_2 \mid L_1 = l_1, A_1 = a_1].
\end{align*}

The fitted values of the $g$-functions and $Q$-functions are easily extracted using \code{predict()}:

\begin{Schunk}
\begin{Sinput}
R> predict(get_g_functions(pe_opt), pd) |> head(3)
\end{Sinput}
\begin{Soutput}
Key: <id, stage>
      id stage        g_0        g_1
   <int> <int>      <num>      <num>
1:     1     1 0.15139841 0.84860159
2:     1     2 0.08557919 0.91442081
3:     2     1 0.93363059 0.06636941
\end{Soutput}
\begin{Sinput}
R> predict(get_q_functions(pe_opt), pd) |> head(3)
\end{Sinput}
\begin{Soutput}
Key: <id, stage>
      id stage       Q_0       Q_1
   <int> <int>     <num>     <num>
1:     1     1  1.817896  4.063055
2:     1     2  1.800290  4.233710
3:     2     1 -2.298324 -4.790946
\end{Soutput}
\end{Schunk}

\subsubsection{Policy Learning}

A $V$-restricted policy can be estimated via the \code{policy_learn()} function.
In this case we use recursive doubly robust value search based on the
\pkg{policytree} package, see Algorithm \ref{alg:sequential_value_search}:
\begin{Schunk}
\begin{Sinput}
R> pl <- policy_learn(type = "ptl",
+                     control = control_ptl(policy_vars = c("C", "L")),
+                     full_history = FALSE,
+                     L = 5)
\end{Sinput}
\end{Schunk}

The policy learner is restricted to $V=(C,L)$ given by the \code{policy_vars}
argument. The learner can be applied directly:

\begin{Schunk}
\begin{Sinput}
R> po <- pl(pd,
+           g_models = g_glm(),
+           g_full_history = FALSE,
+           q_models = q_glm(),
+           q_full_history = TRUE)
R> get_policy(po)(pd) |> head(3)
\end{Sinput}
\begin{Soutput}
Key: <id, stage>
      id stage      d
   <int> <int> <char>
1:     1     1      1
2:     1     2      1
3:     2     1      0
\end{Soutput}
\end{Schunk}

Or the value of the policy learning procedure can be estimated directly using \code{policy_eval()}:
\begin{Schunk}
\begin{Sinput}
R> set.seed(1)
R> pe <- policy_eval(pd,
+                    policy_learn = pl,
+                    g_models = g_glm(),
+                    g_full_history = FALSE,
+                    q_models = q_glm())
R> pe
\end{Sinput}
\begin{Soutput}
    Estimate Std.Err  2.5
ptl    1.385  0.0809 1.226 1.544 1.068e-65
\end{Soutput}
\end{Schunk}

The associated policy objects are also saved for closer inspection, see Figure \ref{fig:ex2_dplot}:
\begin{Schunk}
\begin{Sinput}
R> po <- get_policy_object(pe)
R> po$ptl_objects
\end{Sinput}
\begin{Soutput}
$stage_1
policy_tree object 
Tree depth:  2 
Actions:  1: 0 2: 1 
Variable splits: 
(1) split_variable: C  split_value: -3.14122 
  (2) split_variable: L  split_value: -1.67452 
    (4) * action: 1 
    (5) * action: 2 
  (3) split_variable: C  split_value: -0.274346 
    (6) * action: 1 
    (7) * action: 2 

$stage_2
policy_tree object 
Tree depth:  2 
Actions:  1: 0 2: 1 
Variable splits: 
(1) split_variable: C  split_value: -0.747456 
  (2) split_variable: C  split_value: -0.811175 
    (4) * action: 1 
    (5) * action: 2 
  (3) split_variable: C  split_value: 0.0237423 
    (6) * action: 1 
    (7) * action: 2 
\end{Soutput}
\end{Schunk}

\begin{figure}[!ht]
\centering
\begin{Schunk}

\includegraphics[width=.75\textwidth]{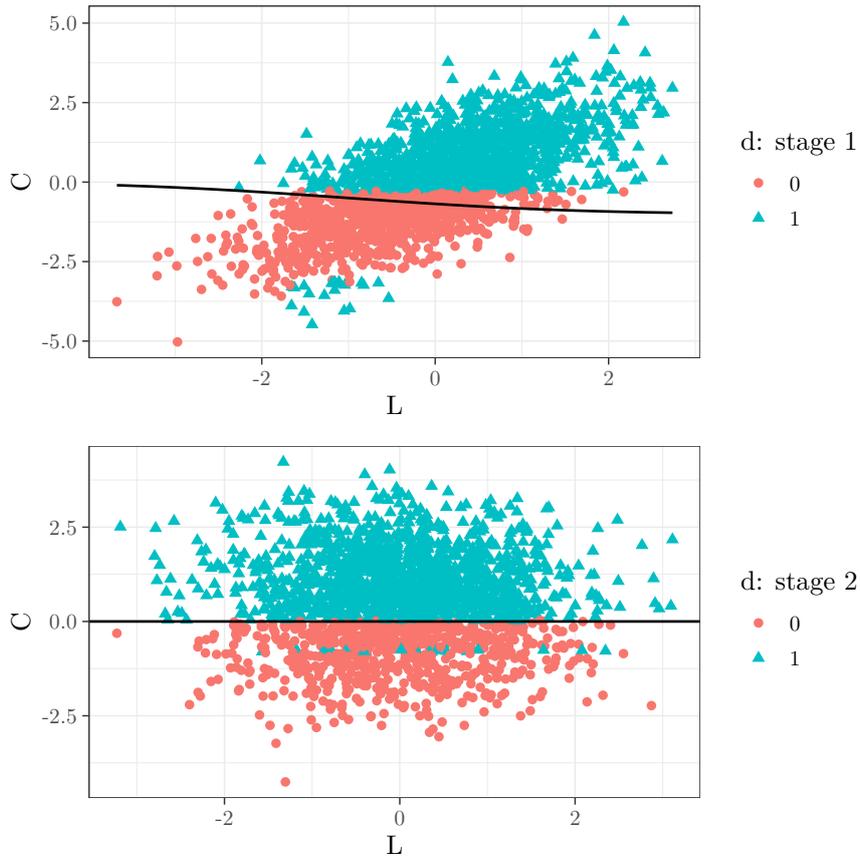} \end{Schunk}
\caption{Fitted policy actions based on policy tree learning. The black lines show the true optimal decision boundaries.}
\label{fig:ex2_dplot}
\end{figure}

\newpage
\section{K-2 Literacy Intervention} \label{sec:harvard_example}

In this section, we demonstrate the application of \pkg{polle} on a real data
example.
To ensure reproducibility we work with a data set publicly available in the
harvard dataverse \url{https://dataverse.harvard.edu/}. Specifically, we use the
kindergarten to second-grade literacy intervention data set \citep{DVN_AVW6KB_2019}
, funded by
the Chan Zuckerberg initiative, which is documented and analyzed in
\citep{kim2019using}. The following analysis is conducted independently of the original
study, and we take full responsibility for any misinterpretations or errors.

The data set contains records of 273 students from kindergarten to second grade
associated with 16 teachers.
The study seeks to investigate the treatment effect of assigning two
types of print texts (10 texts/books in each group) for the students to read
over the summer break. All students received training before the summer break,
and the investigators used a mobile app to engage and monitor each student.
At stage 1 each classroom associated with a given teacher was randomly assigned to
read either conceptually coherent texts (CCT) or leveled text
(LT). At an intermediate time point during the summer break (stage 2), if a
student had completed at least one activity on the mobile app, the student was
classified as a responder and no further actions were initiated. However, all
non-responders were subject to gamification of the app and the parents
were randomly selected to receive text messages with reminders, information, and
encouragement.
The main outcome of the study was the reading measure of academic progress
Rasch unit (MAP RIT) score. The score was measured before and after the summer
break (referred to as spring and fall). Of the 273 students enrolled in the study,
56 students have missing MAP RIT scores. As in the original study, we conduct
a complete case analysis.

We start by loading the data and conducting some basic transformations, which we
include for reproducibility. Note that we define the utility as the difference
in the spring and fall MAP RIT scores. A description of the variables can be
found in Table \ref{tab:k2_var_descriptions}.

\begin{Schunk}
\begin{Sinput}
R> library("readstata13")
R> d <- readstata13::read.dta13("k2smart_public.dta")
R> d <- transform(
+    d,
+    utility = as.numeric(fa_maprit) - as.numeric(sp_maprit),
+    cct = as.logical(cct),
+    responder = as.logical(responder),
+    text = as.logical(text),
+    maprit = as.numeric(sp_maprit),
+    teacher = as.character(t_id_public),
+    attend = as.numeric(sp_pctattend_yr),
+    trc = as.numeric(sp_trcbook_num),
+    dib = as.numeric(sp_dib_score),
+    ell = as.logical(ell),
+    iep = as.logical(iep),
+    grade = as.character(grade_final),
+    male = as.logical(male),
+    familynight = as.logical(familynight)
+  )
\end{Sinput}
\end{Schunk}

\begin{table}[htbp]
\centering
\begin{tabular}{|p{2.5cm}|p{2cm}|p{6cm}|}
 \hline
 \textbf{Variable} & \textbf{Type}  & \textbf{Description} \\
 \hline
 \code{utility} & Numeric & Difference between the spring and fall MAP RIT score \\
 \hline
 \code{cct} & Logical & If \code{TRUE}, the student receives conceptually coherent texts (CCT).
 If \code{FALSE}, the student receives leveled text (LT). \\
 \hline
 \code{responder} & Logical & Respondence indicator \\
 \hline
 \code{text} & Logical & If \code{TRUE} the parents received text messages. \\
 \hline
 \code{maprit} & Numeric & Spring MAP RIT score. \\
 \hline
 \code{teacher} & Character string & Teacher ID. \\
 \hline
 \code{attend} & Numeric & Student attendance percentage. \\
 \hline
 \code{trc} & Numeric & Spring text reading comprehension score. \\
 \hline
 \code{dib} & Numeric & Spring dynamic indicators of basic early literacy skills score. \\
 \hline
 \code{ell} & Logical & If \code{TRUE}, the student received English-language learner services. \\
 \hline
 \code{iep} & Logical & If \code{TRUE}, the student has an individualized education plan. \\
 \hline
 \code{grade} & Character string & Student grade: kindergarten (0), first grade (1), second grade (2). \\
 \hline
 \code{male} & Logical & If \code{TRUE}, the student is a male. \\
 \hline
 \code{familynight} & Logical & If \code{TRUE}, the family of the student attended family night. \\
 \hline
\end{tabular}
\caption{Variable descriptions.
\label{tab:k2_var_descriptions}}
\end{table}

As described we make a complete case analysis:
\begin{Schunk}
\begin{Sinput}
R> d <- subset(d,!is.na(utility))
\end{Sinput}
\end{Schunk}

Finally, we create a stage 1 and stage 2 treatment variable:
\begin{Schunk}
\begin{Sinput}
R> d <- transform(
+    d,
+    A_1 = ifelse(cct, "cct", "lt"),
+    A_2 = ifelse(responder, "continue", ifelse(text, "text", "notext"))
+  )
\end{Sinput}
\end{Schunk}

The \code{policy_data()} function is used to create a policy data object. Note
that \code{responder} is a stage 2 state covariate:
\begin{Schunk}
\begin{Sinput}
R> pd <- policy_data(
+    d,
+    action = c("A_1","A_2"),
+    utility = "utility",
+    baseline=c("maprit",
+               "male",
+               "ell",
+               "iep",
+               "attend",
+               "trc",
+               "dib",
+               "grade",
+               "familynight"),
+    covariates = list(responder = c(NA, "responder"))
+  )
R> print(pd)
\end{Sinput}
\begin{Soutput}
Policy data with n = 217 observations and maximal K = 2 stages.

     action
stage cct continue  lt notext text   n
    1 112        0 105      0    0 217
    2   0       36   0     86   95 217

Baseline covariates: maprit, male, ell, iep, attend, trc,
dib, grade, familynight
State covariates: responder
Average utility: 2.7
\end{Soutput}
\end{Schunk}

Since student responders are not randomized at stage 2, only 4 static realistic
policies exist:
\begin{Schunk}
\begin{Sinput}
R> actions <- list(
+    c("cct","text"),
+    c("cct","notext"),
+    c("lt","text"),
+    c("lt","notext")
+  )
R> static_policies <- lapply(
+    actions,
+    function(a){
+      policy_def(list(
+        function(...) a[1],
+        function(responder) ifelse(responder, "continue", a[2])
+      ),
+      name = paste(a, collapse = "_"))
+    }
+  )
R> head(static_policies[[1]](pd), 4)
\end{Sinput}
\begin{Soutput}
Key: <id, stage>
      id stage      d
   <int> <int> <char>
1:     1     1    cct
2:     1     2   text
3:     2     1    cct
4:     2     2   text
\end{Soutput}
\end{Schunk}

We begin the analysis by comparing the policy value for each of the 4 static
policies. First, we consider a basic inverse probability weighting estimator:
\begin{Schunk}
\begin{Sinput}
R> gm <- list(g_empir(~1),
+             g_empir(~responder))
R> 
R> pe_static_policies_ipw <- lapply(
+    static_policies,
+    function(p){
+      policy_eval(pd,
+                  policy = p,
+                  g_models = gm,
+                  type = "ipw")
+    }
+  )
R> do.call("merge", pe_static_policies_ipw)
\end{Sinput}
\begin{Soutput}
           Estimate Std.Err     2.5
cct_text      2.920   1.159  0.64838 5.193 0.011760
----------                                         
cct_notext    1.737   1.109 -0.43707 3.912 0.117353
----------                                         
lt_text       3.573   1.120  1.37793 5.769 0.001422
----------                                         
lt_notext     2.504   1.322 -0.08728 5.095 0.058233
\end{Soutput}
\end{Schunk}

Note that the reported standard errors are valid because the $g$-models are known
in a randomized trial. The $g$-models specified by the function \code{g_empir()}
computes the (conditional) empirical probabilities and match them to each student:

\begin{Schunk}
\begin{Sinput}
R> print(get_g_functions(pe_static_policies_ipw[[1]]))
\end{Sinput}
\begin{Soutput}
$stage_1
$tab
        A empir_prob
   <char>      <num>
1:    cct   0.516129
2:     lt   0.483871

$v
character(0)

$stage_2
$tab
Index: <A>
          A responder empir_prob
     <char>    <lgcl>      <num>
1: continue      TRUE  1.0000000
2:   notext     FALSE  0.4751381
3:     text     FALSE  0.5248619

$v
[1] "responder"

attr(,"full_history")
[1] FALSE
\end{Soutput}
\begin{Sinput}
R> predict(get_g_functions(pe_static_policies_ipw[[1]]), pd)[c(1,2,43,44),]
\end{Sinput}
\begin{Soutput}
Key: <id, stage>
      id stage    g_cct g_continue     g_lt  g_notext    g_text
   <int> <int>    <num>      <num>    <num>     <num>     <num>
1:     1     1 0.516129          0 0.483871 0.0000000 0.0000000
2:     1     2 0.000000          0 0.000000 0.4751381 0.5248619
3:    22     1 0.516129          0 0.483871 0.0000000 0.0000000
4:    22     2 0.000000          1 0.000000 0.0000000 0.0000000
\end{Soutput}
\end{Schunk}

Efficiency of the policy value estimates can be increased by using the
doubly robust value scores. As we are fitting $25$ sets of nuisance models we
parallelize the computations via the \pkg{future.apply} package. The variable
names used to specify the nuisance model are available via
\code{get_history_names()}:

\begin{Schunk}
\begin{Sinput}
R> get_history_names(pd, stage = 1)
\end{Sinput}
\begin{Soutput}
 [1] "responder_1" "maprit"      "male"        "ell"        
 [5] "iep"         "attend"      "trc"         "dib"        
 [9] "grade"       "familynight"
\end{Soutput}
\begin{Sinput}
R> get_history_names(pd, stage = 2)
\end{Sinput}
\begin{Soutput}
 [1] "A_1"         "responder_1" "responder_2" "maprit"     
 [5] "male"        "ell"         "iep"         "attend"     
 [9] "trc"         "dib"         "grade"       "familynight"
\end{Soutput}
\end{Schunk}

With this help we can easily specify the $Q$-models using the \pkg{SuperLearner}
package and plug them into the \code{policy_eval()} function:

\begin{Schunk}
\begin{Sinput}
R> sl_lib <- c("SL.mean",
+              "SL.glm",
+              "SL.gam",
+              "SL.ranger",
+              "SL.nnet")
R> qm <- list(
+    q_sl(formula = ~.-responder_1, SL.library = sl_lib),
+    q_sl(formula = ~.-responder_1-responder_2, SL.library = sl_lib)
+  )
\end{Sinput}
\end{Schunk}

\begin{Schunk}
\begin{Sinput}
R> library("future.apply")
R> plan(list(
+    tweak("multisession", workers = 4)
+  ))
R> pe_static_policies_dr <- lapply(
+    static_policies,
+    function(p){
+      set.seed(1)
+      policy_eval(pd,
+                  policy = p,
+                  g_models = gm,
+                  q_models = qm,
+                  q_full_history = TRUE,
+                  type="dr",
+                  M = 25)
+    }
+  )
R> print(do.call("merge", pe_static_policies_dr))
\end{Sinput}
\end{Schunk}

\begin{Schunk}
\begin{Sinput}
R> plan("sequential")
R> print(do.call("merge", pe_static_policies_dr))
\end{Sinput}
\begin{Soutput}
           Estimate Std.Err    2.5
cct_text      2.825  0.9137  1.0344 4.616 0.0019874
----------                                         
cct_notext    2.289  1.0123  0.3047 4.273 0.0237595
----------                                         
lt_text       3.554  1.0506  1.4951 5.613 0.0007169
----------                                         
lt_notext     1.946  1.1543 -0.3165 4.208 0.0918351
\end{Soutput}
\end{Schunk}

So far the reported standard errors have been overly optimistic because we
ignored the random teacher/classroom effect.
Luckily, when working with influence curves it is easy to adjust for these
types of dependencies.
When estimating the variance we first sum all of the influence curve terms
related to each teacher. The resulting compounded influence curve terms are then.
independent and the variance is computed in the usual fashion.
This approach is similar to computing clustered standard errors
\citep{liang1986longitudinal}. The method is implemented in the \code{estimate()}
function from the \pkg{lava} package \citep{holst2013linear}:

\begin{Schunk}
\begin{Sinput}
R> library("lava")
R> (est <- estimate(do.call("merge", pe_static_policies_dr), id = d$teacher))
\end{Sinput}
\begin{Soutput}
           Estimate Std.Err    2.5
cct_text      2.825  1.0255  0.8153 4.835 5.870e-03
cct_notext    2.289  0.8746  0.5746 4.003 8.874e-03
lt_text       3.554  0.6580  2.2645 4.844 6.613e-08
lt_notext     1.946  1.3460 -0.6922 4.584 1.483e-01
\end{Soutput}
\end{Schunk}

As is already evident, none of the static policies are statistically different
in terms of marginal value. For completeness we conduct a chi square test:

\begin{Schunk}
\begin{Sinput}
R> pdiff <- function(n) lava::contr(lapply(seq(n-1), \(x) seq(x, n)))
R> estimate(est, f = pdiff(4))
\end{Sinput}
\begin{Soutput}
                          Estimate Std.Err   2.5
[cct_text] - [cct_notext]   0.5364  0.9023 -1.232 2.3048  0.5522
[cct_text] - [lt_text]     -0.7290  1.0779 -2.842 1.3836  0.4988
[cct_text] - [lt_notext]    0.8794  1.6217 -2.299 4.0578  0.5876
[cct_notext] - [lt_text]   -1.2653  0.9088 -3.046 0.5158  0.1638
[cct_notext] - [lt_no....   0.3430  1.6654 -2.921 3.6072  0.8368
[lt_text] - [lt_notext]     1.6083  1.8288 -1.976 5.1928  0.3792

 Null Hypothesis: 
  [cct_text] - [cct_notext] = 0
  [cct_text] - [lt_text] = 0
  [cct_text] - [lt_notext] = 0
  [cct_notext] - [lt_text] = 0
  [cct_notext] - [lt_notext] = 0
  [lt_text] - [lt_notext] = 0 
 
chisq = 2.1289, df = 3, p-value = 0.5461
\end{Soutput}
\end{Schunk}

The function \code{conditional()} allows the user to easily compute the
conditional policy value estimates based on categorical baseline covariates.
Here we group by the baseline covariate \code{male} for the static CCT text
policy:

\begin{Schunk}
\begin{Sinput}
R> estimate(
+    conditional(pe_static_policies_dr[[1]], pd, "male"),
+    id = d$teacher
+  )
\end{Sinput}
\begin{Soutput}
           Estimate Std.Err   2.5
male:FALSE    1.875  0.6525 0.5961 3.154 4.060e-03
male:TRUE     3.958  0.6706 2.6435 5.272 3.598e-09
\end{Soutput}
\end{Schunk}

Even though none of the static policies have a marginal treatment effect
we may hope to find group specific treatment effects. To investigate further,
we specify a selection of doubly robust $V$-restricted $Q$-learners and estimate
the cross-fitted value of the fitted policies.

We formulate simple linear $\QV$-models using the \code{q_glm()} function as we do not
expect to be able to find complex non-linear treatment associations in this
relatively small data set.
\begin{Schunk}
\begin{Sinput}
R> qvm_formulas <- list(
+    qvm_1 = list(~1, ~A_1),
+    qvm_2 = list(~maprit, ~A_1+maprit),
+    qvm_3 = list(~male, ~A_1+male),
+    qvm_4 = list(~grade, ~A_1+grade),
+    qvm_5 = list(~male+maprit, ~A_1+male+maprit),
+    qvm_6 = list(~male+grade+maprit, ~A_1+grade+male+maprit)
+  )
R> 
R> qvm <- lapply(qvm_formulas,
+         function(form){
+           list(q_glm(form[[1]]), q_glm(form[[2]]))
+         })
\end{Sinput}
\end{Schunk}

The $Q$-models are then passed to the controls of the \code{policy_learn()}
function. Importantly, note that \code{alpha} is set to 0.01 in order to account
for the degenerate structure of the data; A student responder always continue
the treatment in stage 2.
\begin{Schunk}
\begin{Sinput}
R> pl_drql <- mapply(
+    qvm,
+    names(qvm),
+    FUN = function(qv, name){
+      policy_learn(type = "drql",
+                   control = control_drql(qv_models = qv),
+                   full_history = TRUE,
+                   alpha = 0.01,
+                   L = 25,
+                   cross_fit_g_models = FALSE,
+                   name = name)
+    })
\end{Sinput}
\end{Schunk}

The value of the fitted policies are cross-fitted using \code{policy_eval()}:
\begin{Schunk}
\begin{Sinput}
R> plan(list(
+    tweak("multisession", workers = 4)
+  ))
R> set.seed(1)
R> pe_drql <- lapply(
+    pl_drql,
+    function(pl){
+      set.seed(1)
+      policy_eval(pd,
+                  policy_learn = pl,
+                  g_models = gm,
+                  q_models = qm,
+                  q_full_history = TRUE,
+                  type="dr",
+                  M = 25)
+    })
\end{Sinput}
\end{Schunk}
\begin{Schunk}
\begin{Sinput}
R> estimate(do.call(what = "merge", unname(pe_drql)), id = d$teacher)
\end{Sinput}
\begin{Soutput}
      Estimate Std.Err     2.5
qvm_1    3.063  0.5911  1.90486 4.222 2.188e-07
qvm_2    1.228  0.9686 -0.67020 3.127 2.048e-01
qvm_3    3.382  1.3075  0.81952 5.945 9.689e-03
qvm_4    2.480  1.2339  0.06183 4.898 4.442e-02
qvm_5    1.885  1.0368 -0.14663 3.917 6.898e-02
qvm_6    2.055  1.1366 -0.17234 4.283 7.055e-02
\end{Soutput}
\end{Schunk}

None of the fitted policies show a gain in value compared to the static
policy \code{lt_text}. However, we might still want to study a possible male
treatment interaction further (\code{qvm_3}). We fit policy learner 3 on the
complete data set and summarize the dictated actions:

\begin{Schunk}
\begin{Sinput}
R> set.seed(1)
R> po_drql_male <- pl_drql[["qvm_3"]](pd,
+                                     g_models = gm,
+                                     q_models = qm,
+                                     q_full_history = TRUE)
R> pa_drql_male <- get_policy(po_drql_male)(pd)
R> head(pa_drql_male, 4)
\end{Sinput}
\begin{Soutput}
Key: <id, stage>
      id stage      d
   <int> <int> <char>
1:     1     1    cct
2:     1     2   text
3:     2     1    cct
4:     2     2   text
\end{Soutput}
\end{Schunk}

\begin{Schunk}
\begin{Sinput}
R> pa_drql_male <- merge(pa_drql_male, get_history(pd)$H)
R> pa_drql_male[,.N, list(stage, male,d)][order(stage, male,d)]
\end{Sinput}
\begin{Soutput}
   stage   male        d     N
   <int> <lgcl>   <char> <int>
1:     1  FALSE       lt   118
2:     1   TRUE      cct    99
3:     2  FALSE continue    23
4:     2  FALSE     text    95
5:     2   TRUE continue    13
6:     2   TRUE     text    86
\end{Soutput}
\end{Schunk}

Thus, the fitted policy suggests that males receive CCT and females receive
LT at stage 1 and that all non-responders get text messages at stage 2.

We end this analysis by emphasizing the importance of cross-fitting the policy
learner because it is easy to overfit an optimal policy. We showcase this by fitting
the most complex of the considered policy learners:

\begin{Schunk}
\begin{Sinput}
R> po_drql_6 <- pl_drql[["qvm_6"]](pd,
+                                  g_models = gm,
+                                  q_models = qm,
+                                  q_full_history = TRUE)
\end{Sinput}
\end{Schunk}

The dictated actions are easily plotted using the \code{get_history()} and
\code{get_policy()} functions:

\begin{Schunk}
\begin{Sinput}
R> plot_data <- get_history(pd)$H
R> plot_data <- merge(plot_data,
+                     get_policy(po_drql_6)(pd),
+                     by = c("id", "stage"))
R> library("ggplot2")
R> ggplot(plot_data) +
+    geom_point(aes(x = grade, y = maprit, color = d)) +
+    facet_wrap(~stage+male, labeller = "label_both") +
+    theme_bw()
\end{Sinput}

{\centering \includegraphics{figure/fig_harvard_plot-1} 

}

\end{Schunk}

If we just plug in the resulting fitted policy we get an overly optimistic
estimate of the value.
\begin{Schunk}
\begin{Sinput}
R> plan(list(
+    tweak("multisession", workers = 4)
+  ))
R> set.seed(1)
R> pe_plugin <- policy_eval(pd,
+                           policy = get_policy(po_drql_6),
+                           g_models = gm,
+                           q_models = qm,
+                           q_full_history = TRUE,
+                           type = "dr",
+                           M = 25,
+                           name = "qvm_6_plugin")
R> estimate(pe_plugin + pe_drql[["qvm_6"]], id = d$teacher)
\end{Sinput}
\begin{Soutput}
             Estimate Std.Err    2.5
qvm_6_plugin    4.912   1.097  2.7625 7.061 7.491e-06
qvm_6           2.055   1.137 -0.1723 4.283 7.055e-02
\end{Soutput}
\end{Schunk}


\section{Summary and discussion} \label{sec:summary}

The \pkg{polle} library is the first unifying \proglang{R} package for learning and
evaluating policies. The package efficiently handles cross-fitting of the
nuisance models and provides protection against (near) positivity violations.
Also, to our knowledge, \pkg{polle} contains the first implementation of doubly
robust restricted $Q$-learning which can serve as sensible benchmark for all
other learning methods.

Of course, \pkg{polle} has its limitations. Future work to be included in the
package includes the handling of missing data and (right) censored observations.
The \code{event} variable included in the policy data object can be extended to
specify missing or censored data similar to that of the \code{Surv} function in
the \pkg{survival} package. Additional models for the censoring distribution
would need to be included. Optimal policies for survival outcomes has been
studied in \citep{bai2017optimal, diaz2018targeted}. See also
\citep{cui2023heter, xu2023treatment,steingrimsson2016doubly}.

In our work we only consider the maximization of a scalar utility value.
However, in some applications a multi-dimensional value vector may more
naturally be of interest. In such cases the set of Pareto efficient policies can
be formulated. An important example would be the task of maximizing the utility
subject to variance constraints in order to learn robust policies. This is
closely related to introducing a penalty term to the loss function, and will be the
subject of future developments to the \pkg{polle} package.

\vspace*{1em}

The package is available directly from the Comprehensive R Archive Network
(CRAN) \citep{cran:polle}. We believe the package will provide practitioners with
much easier access to a broad range of policy learning methods and hope that it also may serve as a
framework for benchmarking as well as implementing new methods for researchers in the policy learning field.
We invite to collaboration on the future development of the package via pull
requests to the github repository
\url{https://github.com/AndreasNordland/polle/}.


\section*{Computational details}

The results in this paper were obtained using \proglang{R}~4.3.2 \citep{rcore} with the
\pkg{polle}~1.4 package. \proglang{R} itself and all packages used are available
from the Comprehensive \proglang{R} Archive Network (CRAN) at
\url{https://CRAN.R-project.org/}.

\FloatBarrier

\section*{Acknowledgments}
This work was supported by a grant from InnovationsFonden Danmark (case no. 8053-00096B).

\bibliography{refs}


\newpage

\begin{appendix}

\section{Binary $V$-optimal policy} \label{sec:v_optimal_policy}

Consider the two-stage case, $O = (S_1, A_1, S_2, A_2, U)$, where $A_1$ and $A_2$ are binary. Let $V_1$ be a function of $H_1 = S_1$ and let $(A_1, V_2)$ be a function of $H_2$. The $V$-optimal policy is defined as
\begin{align*}
d_0 = \arg \max_{d\in \mathcal{D}} E[U^d].
\end{align*}

The following theorem is a corrected version of Theorem 1 in \cite{van2014targeted}.
\begin{theorem} \label{theo:v_optimal_policy}
If $V_1$ is a function of $V_2$, then the $V$-optimal policy $d_{0}$ is given by
\begin{align*}
B_{0,2}(a_1, v_2) &= \E[U^{a_1,a_2 = 1} | V_2^{a_1} = v_2 ] - \E[U^{a_1,a_2 = 0} | V_2^{a_1} = v_2 ]\\
d_{0,2}(a_1, v_2) &= I\left\{B_{0,2}(a_1, v_2) > 0 \right\}\\
B_{0,1}(v_1) &= \E[U^{a_1 = 1,d_{0,2}} | V_1 = v_1 ] - \E[U^{a_1 = 0,d_{0,2}} | V_1 = v_1 ]\\
d_{0,1}(v_1) &= I\left\{B_{0,1}(v_1) > 0 \right\}.
\end{align*}
The above statement os also true if for all $a_1$ and $a_2$
\begin{align}
\E[U^{a_1,a_2}| V_1, V_2^{a_1}] = \E[U^{a_1,a_2}| V_2^{a_1}]. \label{eq:v_optimal_condition}
\end{align}
\end{theorem}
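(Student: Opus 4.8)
The plan is to prove the statement by backward induction over the two stages, reducing the optimization at each stage to a pointwise comparison via the law of iterated expectations, with condition \eqref{eq:v_optimal_condition} supplying exactly the conditional-independence needed for the inner tower step. Throughout I would assume positivity, so that both static interventions are feasible and all counterfactual conditional means are well defined. I would first establish the theorem under the weaker hypothesis \eqref{eq:v_optimal_condition} and then observe that the first hypothesis, namely that $V_1$ is a function of $V_2^{a_1}$, is merely a special case: if $V_1$ is determined by $V_2^{a_1}$, then conditioning additionally on $V_1$ adds no information and $\E[U^{a_1,a_2}\mid V_1, V_2^{a_1}] = \E[U^{a_1,a_2}\mid V_2^{a_1}]$ holds automatically, so it suffices to treat \eqref{eq:v_optimal_condition}.

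For the stage-2 step I would fix an arbitrary $V$-restricted policy $d = (d_1, d_2)$ and write, using the tower property, $\E[U^d] = \E\big[\,\E[U^{d_1(V_1),\, d_2(d_1(V_1), V_2^{d_1(V_1)})} \mid V_1, V_2^{d_1(V_1)}]\,\big]$. Writing $a_1 = d_1(V_1)$, the stage-2 action $d_2(a_1, V_2^{a_1})$ is measurable with respect to the conditioning variables and can be pulled out of the inner expectation; applying \eqref{eq:v_optimal_condition} then replaces $\E[U^{a_1,a_2}\mid V_1, V_2^{a_1}]$ by $\QV_{0,2}(a_1, V_2^{a_1}, a_2) = \E[U^{a_1,a_2}\mid V_2^{a_1}]$, with blip $B_{0,2} = \QV_{0,2}(\cdot,\cdot,1) - \QV_{0,2}(\cdot,\cdot,0)$. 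This gives $\E[U^d] = \E\big[\QV_{0,2}(a_1, V_2^{a_1}, d_2(a_1, V_2^{a_1}))\big]$. Since $\QV_{0,2}(a_1, v_2, \cdot)$ is maximized over $\{0,1\}$ at $d_{0,2}(a_1, v_2) = I\{B_{0,2}(a_1, v_2) > 0\}$, and since this rule depends only on $(A_1, V_2)$ and is therefore itself $V$-restricted, I would conclude $\E[U^{d_1, d_2}] \le \E[U^{d_1, d_{0,2}}]$ for every $d_1$.

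For the stage-1 step, with $d_2 = d_{0,2}$ now fixed, I would condition directly on the stage-1 restriction variable $V_1$ and pull out $d_1(V_1)$ to obtain $\E[U^{d_1, d_{0,2}}] = \E\big[\QV^{d_{0,2}}_{0,1}(V_1, d_1(V_1))\big]$, where $\QV^{d_{0,2}}_{0,1}(v_1, a_1) := \E[U^{a_1, d_{0,2}}\mid V_1 = v_1]$; no further assumption is needed here since the conditioning is on $V_1$ itself. The pointwise maximizer over $a_1 \in \{0,1\}$ is $d_{0,1}(v_1) = I\{B_{0,1}(v_1) > 0\}$ with $B_{0,1}(v_1) = \QV^{d_{0,2}}_{0,1}(v_1, 1) - \QV^{d_{0,2}}_{0,1}(v_1, 0)$, matching the statement. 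Chaining the two inequalities yields $\E[U^{d_1, d_2}] \le \E[U^{d_1, d_{0,2}}] \le \E[U^{d_{0,1}, d_{0,2}}]$ for every $V$-restricted $d$, so $d_0 = (d_{0,1}, d_{0,2})$ is $V$-optimal.

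I expect the main obstacle to be the stage-2 reduction: the bookkeeping of the nested counterfactual $U^{d_1(V_1),\, d_2(d_1(V_1), V_2^{d_1(V_1)})}$ and the precise justification for replacing the conditional mean given $(V_1, V_2^{a_1})$ by the $\QV$-function that conditions on $V_2^{a_1}$ alone. This is exactly where \eqref{eq:v_optimal_condition} is indispensable, because without it the unconstrained pointwise optimum would depend on the extra information carried by $V_1$ and would therefore fall outside the class of $(A_1,V_2)$-measurable stage-2 rules, so the backward-induction rule could not be certified as optimal. A secondary point requiring care is verifying at each stage that the proposed optimal rule indeed lies in $\mathcal{D}^V$, which is what legitimizes passing from the pointwise maximization to optimality of the integrated value.
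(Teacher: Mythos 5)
Your proposal is correct and follows essentially the same route as the paper's proof: decompose $\E[U^d]$ over the action values via the tower property, invoke \eqref{eq:v_optimal_condition} (noting, as you do, that $V_1$ being a function of $V_2^{a_1}$ implies it) to reduce the inner conditional mean to $\QV_{0,2}$, maximize pointwise at stage 2, then condition on $V_1$ and maximize pointwise at stage 1, chaining $\E[U^{d_1,d_2}] \leq \E[U^{d_1,d_{0,2}}] \leq \E[U^{d_{0,1},d_{0,2}}]$. The ``bookkeeping'' obstacle you flag is handled in the paper exactly as you anticipate, by writing $U^d = \sum_{a_1,a_2} U^{a_1,a_2} I\{d_2(a_1,V_2^{a_1})=a_2\} I\{d_1(V_1)=a_1\}$ so that conditioning proceeds at fixed $(a_1,a_2)$.
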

\begin{proof}
Let $V^a = (V_1, V_2^a)$. For any policy $d$
\begin{align*}
\E[U^d] &= \E\left[ \sum_{a_1, a_2} U^{a_1, a_2} I\{d_2(a_1, V_2^{a_1}) = a_2\} I\{d_1(V_1) = a_1\} \right]\\
&= \sum_{a_1} \E\left[ \left\{ \sum_{a_2} E \left( U^{a_1, a_2}\big| V_2^{a_1} \right) I\{d_2(a_1, V_2^{a_1}) = a_2\} \right\} I\{d_1(V_1) = a_1\}\right],
\end{align*}
where it is used that $V_1$ is a function of $V_2^{a_1}$ or that \eqref{eq:v_optimal_condition} holds. For any $a_1$ the inner sum is maximzed in $d_2$ by $d_{0,2}$, i.e., $\E[U^d] \leq E[U^{d_1, d_{0,2}}]$. Now,
\begin{align*}
\E[U^{d_1, d_{0,2}}] = \E\left[ \sum_{a_1} \E[U^{a_1, d_{0, 2}} | V_1] I\{d_1(V_1) = a_1\}\right],
\end{align*}
which is maximized for $d_1 = d_{0,1}$, i.e., $\E[U^d] \leq \E[U^{d_1, d_{0,2}}] \leq \E[U^{d_{0,1}, d_{0,2}}]$.
\end{proof}

\newpage

\section{Weighted classification loss function} \label{sec:class_loss}

We continue the setup from Appendix \ref{sec:v_optimal_policy}. At the second stage, define the doubly robust
blip score as
\begin{align*}
W_{2}(g, Q)(O) = Z_{2}(1, g, Q)(O)-Z_{2}(0, g, Q)(O),
\end{align*}
where $Z_{2}(a_{2}, g, Q)$ is the doubly robust score from \eqref{eq:Z_score_d_k}:
\begin{align*}
Z_{2}(a_{2}, g, Q)(O) = Q_{2}(H_2,a_{2}) + \frac{I\{A_2 = a_{2}\}}{g_{2}(H_2, A_2)} \left\{U - Q_{2}(H_{2}, A_{2}) \right\}.
\end{align*}
Now, define the loss function
\begin{align*}
L_{2}(d_2)(g_0, Q_{0})(O) = \lvert W_{2}(g_0, Q_{0})(O)\rvert I\Big\{d_2(A_1, V_2) \neq  I\{W_{2}(g_0, Q_{0})(O)> 0\}\Big\}.
\end{align*}
This is a valid loss function since the value loss function $\tilde L_{2}(d_2)(g_0)(O)$ from \eqref{eq:valuelossK} is a valid loss function and
\begin{align}
  -&\tilde L_{2}(d_2)(g, Q)(O) \\
 =&\,\, Q_{2}(H_2,d_{2}(H_{2})) + \frac{I\{A_2 = d_2(A_1, V_2)\}}{g_{2}(H_2, A_2)} \left\{U - Q_{2}(H_{2}, A_{2}) \right\} \nonumber \\
 =& \,\, d_2(A_1, V_2) W_{2}(g, Q)(O) + Q_{2}(H_2,0) + \frac{I\{A_2 = 0\}}{g_{2}(H_2, A_2)} \left\{ U - Q_{2}(H_{2},A_{2}) \right\}\nonumber \\
 =& \,\, I\{W_{2}(g, Q)(O) > 0\} \lvert W_{2}(g,Q)(O) \rvert +Q_{2}(H_2,0) + \frac{I\{A_2 = 0\}}{g_{2}(H_2, A_2)} \left\{ U - Q_{2}(H_{2},A_{2}) \right\} \label{eq:class_d_free} \\
 \quad&- \lvert W_{2}(g,Q)(O) \rvert I\Big \{d_2(A_1, V_2) \neq I\{W_{2}(g, Q)(O)> 0\} \Big\}.
\end{align}
Importantly, note that line \eqref{eq:class_d_free} does not depend on $d_{2}$. The last equality holds because for $d\in \{0,1\}$ and $W\in \mathbb{R}$
\begin{align*}
dW =& d|W|I\{W > 0\} - d|W|I\{W \leq 0\}\\
=& |W|I\{W > 0\} - |W|\Big((1-d)I\{W>0\} + d I\{W \leq 0\}\Big)\\
=& |W|I\{W > 0\} - |W|I\Big \{d \neq I\{W>0\} \Big\}.
\end{align*}
At the first stage, define the doubly robust blip score as
\begin{align*}
W_{1}(d_{2}, g, Q^{d_{2}}) =  Z_{1}([1,d_{2}], g, Q^{d_{2}})(O)-Z_{1}([0, d_{2}], g, Q^{d_{2}})(O).
\end{align*}
By similar calculations a valid loss function for $d_{0,1}$ is given by
\begin{align*}
 L_{1}(d_1)(d_{0,2},g_0, Q_{0}^{d_{0,2}})(O) = \lvert W_{1}(d_{0,2}, g_0, Q_{0}^{d_{0,2}})(O) \rvert I\Big\{d_1(V_1) \neq  I\{W_{1}(d_{0,2}, g_0, Q_{0}^{d_{0,2}})(O)> 0\}\Big \}.
\end{align*}
\end{appendix}


\end{document}